\newtheorem{remark}{Remark}
\definecolor{ForestGreen}{RGB}{34,139,34}
\newtheorem{theorem}{Theorem}
\newcommand{\Apalpha}{\mathbf A_{\mathrm p,\alpha}}
\newcommand{\Apomega}{\mathbf A_{\mathrm p,\omega}}
\begin{document}	
	\title{Channel Estimation and Data Detection in DS-Spread Channels: A Unified Framework, Novel Algorithms, and Waveform Comparison}
	\author{Niladri Halder,~\IEEEmembership{Graduate Student Member,~IEEE}, and Chandra R. Murthy,~\IEEEmembership{Fellow,~IEEE}
		\thanks{The authors are with the Department of Electrical Communication Engineering, Indian Institute of Science, Bengaluru 560012, India (e-mails: niladrih@iisc.ac.in; cmurthy@iisc.ac.in). This work was presented in parts at IEEE ICASSP $2023$~\cite{icassp} and  IEEE SPAWC $2023$~\cite{ICEDD}.}
		\vspace{-0.5cm}
	}
	\maketitle
		\begin{abstract}
		We present a unified receiver processing framework for communication over delay-scale (DS)-spread channels that arise in underwater acoustic (UWA) communications that addresses both channel estimation (CE) and data detection for different modulation waveforms, namely OFDM, OTFS, OCDM, and ODSS, through a common input–output relation. Using this framework, we conduct a fair and comprehensive comparative study of these waveforms under DS-spread UWA channels and similar receiver complexities.
				
			We also develop a novel iterative variational Bayesian (VB) off-grid CE algorithm to estimate the delay and scale parameters of the channel paths, via two approaches: a first-order approximation scheme (FVB) and a second-order approximation scheme (SVB). We propose a low-complexity variational soft symbol detection (VSSD) algorithm that outputs soft symbols and log-likelihood ratios for the data bits, and a data-aided iterative CE and data detection (ICED) scheme that utilizes detected data symbols as \emph{virtual} pilots to further improve the CE and data detection accuracy.

		Our numerical results reveal the efficacy of the proposed algorithms for CE and data detection. In terms of relative performance of different waveforms, in uncoded communications, (a) with a low-complexity subcarrier-by-subcarrier equalizer, ODSS offers the best performance, followed by OCDM and OTFS, while OFDM performs the worst, and (b) with the VSSD algorithm, OTFS, OCDM, and ODSS perform similarly, and they outperform OFDM. With coded communications, interestingly, all waveforms offer nearly the same BER when the VSSD receiver is employed. Hence, we conclude that when the receiver complexity is constrained, waveform choice matters, especially under harsh channel conditions, whereas with more sophisticated receiver algorithms, these differences disappear.
		\end{abstract}
		\begin{IEEEkeywords}
			Underwater communications, variational Bayesian methods, delay-scale spread channels, signaling waveforms, channel estimation, soft-symbol detection.
		\end{IEEEkeywords}
	
\section{Introduction}
In the recent literature, a variety of waveforms with associated transceiver architectures have been proposed as candidates for 6G cellular, WiFi, and underwater acoustic communications, with each tailored to specific channel characteristics such as delay spread, delay-Doppler (DD) spread, and delay-scale (DS) spread channels \cite{hong2022delay,dsbook,vbmc,otfs,ocdm,odss}. These studies 
typically consider a waveform-dependent transceiver structure, different channel models, receive-processing techniques, etc, leaving the following core question unanswered: do these performance differences arise from an inherent difference in the way the waveforms interact with the underlying channel, or do they mainly arise from the differences in the receiver processing  (channel estimation and data detection) techniques used while evaluating the performance? Addressing this question requires a consistent receiver design that can be applied across all waveforms, and this motivates a second goal of this work: to develop a unified receiver processing framework that is agnostic to the underlying waveform and encompasses channel estimation, equalization, and soft-symbol detection. Developing such a framework is necessary for comparing different waveforms on a level playing field and thereby answering the core question mentioned above.

In this paper, we present a comprehensive comparative study of the performance of several key waveforms, considering both estimated channel state information at the receiver (ECSIR) and perfect channel state information at the receiver (PCSIR), in the context of delay-scale spread channels that arise in underwater acoustic (UWA) communications. We focus on UWA channels because these channels present the harshest conditions for communications: multipath with large delay spread leading to frequency-selectivity and a spread in the time-scaling of the multipath components due to severe Doppler effects, leading to time-selectivity.

Most radio frequency (RF) wireless communication channels satisfy the \emph{narrowband} criteria,\footnote{The narrowband criteria are: 1. The signaling bandwidth is much smaller than its center frequency; 2. The ratio of the relative speed of the transmitter and receiver to the signal velocity in the medium is much smaller than the time-bandwidth product of the signal.} and the relative motion between source, receiver, or scatterers results in a constant frequency shift, also known as \emph{Doppler shift}, of the signals. One of the prominent ways to model these narrowband channels is the DD-spread representation \cite{hong2022delay}, where each tap of the multipath channel is characterized by a distinct complex gain, propagation delay, and Doppler shift. However, the linear time-varying channels present in UWA communications do not satisfy the narrowband assumptions. Then, the effect of Doppler is to time-scale the transmitted signal~\cite{dsbook}. As a result, these wideband channels are better characterized by a DS-spread representation \cite{discreteTimeScale}, where each tap of the multipath channel is characterized by a distinct complex gain, propagation delay, and time-scale parameter. 

Time-scale effects in UWA channels pose significant challenges for data detection, as conventional receivers, such as subcarrier-by-subcarrier processing in orthogonal frequency division multiplexing (OFDM) systems, tend to perform poorly. Orthogonal time frequency space (OTFS) \cite{otfs} and orthogonal chirp division multiplexing (OCDM) \cite{ocdm} have been introduced in the literature to address the limitations of OFDM in DD-spread channels. Orthogonal delay-scale space (ODSS) \cite{odss} is a recently proposed waveform specifically designed for DS-spread channels. Given these advancements, it is pertinent to examine the relative performance of these waveforms under different channel estimation (CE) and data detection methods.
 
In~\cite{zhou2014ofdm}, a detailed study of the challenges associated with OFDM in UWA communications is presented. The BER of OCDM compared to OFDM in shallow water acoustic channels is evaluated in \cite{ocdm_uwa0,ocdm_uwa1}, utilizing a channel decoder that processes soft symbols derived from a minimum mean square error (MMSE) equalizer. A comparison of OFDM, OTFS and OCDM in UWA uncoded communications under both PCSIR and ECSIR using MMSE equalizers is presented in~\cite{DavidKoilpillai1, DavidKoilpillai2}. The main conclusion was that, in uncoded communications and with MMSE receivers, OTFS and OCDM outperform OFDM, but the performance gap reduces in the presence of channel estimation errors. Further,  \cite{otfs_uwa1} proposes a low-complexity MMSE turbo equalization technique for UWA OTFS systems. In \cite{otfs_uwa3}, a learned denoising-based sparse adaptive channel estimation method is presented in the DD domain, and \cite{otfs_uwa4} addresses the peak-to-average power ratio issue by proposing a DD domain MMSE turbo equalizer tailored for single-carrier UWA communications in rapidly time-varying channels. Joint carrier frequency offset estimation gridless channel estimation in OFDM systems is addressed in~\cite{R4C6_1}. In \cite{new_ref}, a parametric bilinear generalized approximate message passing-based algorithm to jointly estimate channels and data under nonuniform Doppler shifts in single-carrier systems is developed. However, all these works assume that the UWA channel is narrowband, which does not fully capture its characteristics. UWA channels are inherently wideband and are more accurately modeled as DS-spread channels, as discussed in \cite{dsbook,discreteTimeScale}. Moreover, the relative performance of different waveforms also depends on the receiver processing (e.g., the channel estimation algorithm, channel coding, etc) employed. In this work, we consider all these aspects and present a comprehensive comparative study.

The estimation of DS-spread channels has been actively studied in the literature. In \cite{Neasham}, the time-scale is estimated by analyzing the peaks of the matched filter outputs, which is extended to multicarrier systems and validated through experimental studies in \cite{BLi}. Both works use a linear frequency modulation (LFM) signal as preamble and postamble. Alternatively, \cite{Berger} and~\cite{ocdm_uwa2,ocdm_uwa3} propose the use of OFDM and OCDM, respectively, as a preamble. These works assume a common Doppler scaling factor across all propagation paths, an assumption that does not hold in practical DS-spread UWA channels, where time-scales are path-dependent~\cite{Josso,Qu}.  

In the DS domain (also in the DD domain), the channel response exhibits sparsity due to the small number of propagation paths. This sparsity structure can be exploited to estimate the channel with low pilot/training overhead, by using sparse signal recovery (SSR) algorithms. This is done by considering a grid of points in the DS domain and constructing a large dictionary matrix whose columns capture the effect of a path existing on each grid point on the received samples. Then, one can construct an underdetermined set of linear measurements and solve them using SSR techniques.
In \cite{Josso,Qu}, a matching pursuit-based SSR algorithm is developed to estimate the DS-spread channel parameters. 
We developed a variational Bayesian (VB) approach for DS-spread channel estimation algorithm in~\cite{icassp} and extended it to joint channel estimation and data detection in~\cite{ICEDD}. 
These methods, including our past work, rely on a fixed dictionary matrix based on finite grid points, assuming on-grid channel parameters (i.e., the channel parameters fall on the grid points determined by the sampling rate at the receiver), which is often unrealistic.
In practical DS-spread channels, the multipath delay and scale parameters need not lie on the grid points. This off-grid nature introduces significant challenges in estimating the channel parameters, as the fixed dictionary matrix fails to accurately represent the channel response.

Off-grid DD-spread CE has received much attention in the recent literature~\cite{DDSLi,offgridOTFSQWang,DDgridEvolution,offgridOTFSYZhang}. These works employ a first-order Taylor series-based approximation of the dictionary matrix, which works well for DD-spread channels with relatively small delay and Doppler spreads. However, the estimation of off-grid parameters is more challenging in UWA channels with large delay and \emph{scale} spreads arising from the slow propagation speed of acoustic waves ($1500$ m/s). 
This motivates the need for developing algorithms to accurately estimate the off-grid delay and scale parameters of UWA channels. To this end, our main contributions in this paper are as follows:

		\begin{enumerate}
			\item We develop a unified model for the end-to-end communication system that allows us to compare multiple waveforms on a level-playing field (see Sec.~\ref{sec:system_model}.) Further, we present a unified receiver processing framework that includes both CE and data detection.
			\item Based on the unified framework, we propose a novel two-step iterative off-grid DS-spread CE scheme (see Sec.~\ref{sec:offgrid_framework}.) In the first step, an SSR problem is formulated using a dictionary matrix constructed from a fixed set of grid points in the DS-domain, and the sparse channel parameters are estimated using a VB-based technique (see Sec.~\ref{sec:vb_algo}.) In the second step, the grid points are updated using either a first-order approximation of the dictionary matrix (see Sec.~\ref{sec:FOA}) or a second-order approximation of an objective function (see Sec.~\ref{sec:soa_approximation_obj}), leading to two novel off-grid CE approaches: FVB and SVB. In particular, while prior work has applied a second-order Newtonized step to the orthogonal matching pursuit algorithm for off-grid frequency estimation \cite{nomp}, to the best of our knowledge, no work has integrated second-order approximation into a variational Bayesian framework for delay and scale parameter estimation. This work is the first to do so, and we derive novel update rules for DS grid refinement using Newton iterations.
			\item We benchmark the normalized mean square error (NMSE) of the CE algorithms by deriving a sparsity-aware Cram\'{e}r-Rao lower bound (CRLB) (see Sec.~\ref{sec:CRLB}.) 
			\item Within the unified framework, we design a low complexity VB-based VSSD algorithm (see Sec.~\ref{sec:vssd}), which computes the posterior probabilities of the data bits (i.e., the soft symbol estimates). The soft symbols are used to compute the LLRs, which are passed to the channel decoder in coded communications.
			\item We propose a data-aided ICED technique (see Sec. \ref{sec:icedd}) to enhance both CE and data detection. The process begins with an initial estimation of the off-grid DS-spread channel using a short preamble (pilot) through the FVB or SVB algorithm, followed by data symbol detection using either the MMSE or VSSD equalizer. The detected data symbols are then used as virtual pilots to refine the CE. This iterative process of alternating between CE and data detection improves the overall performance. 
			\item Finally, using the unified framework, we conduct a fair and comprehensive comparative evaluation of the waveforms in terms of CE and data detection performance under similar channel conditions and receiver complexities. Specifically, the NMSE is evaluated for all the waveforms under ECSIR. For uncoded communications, we compare the BER of all the waveforms under PCSIR and benchmark it against their BER under ECSIR. We also compare the BER under ECSIR with coded communications.
		\end{enumerate}
		Our simulation results (in Sec.~\ref{sec:sim_res}) show that the NMSE in CE of the algorithms developed asymptotically approach the CRLB, when the channel parameters take on-grid values. For practical off-grid channel parameters, the off-grid algorithms outperform the state-of-the-art algorithms. 
		Furthermore, the proposed VSSD equalizer outperforms the conventional MMSE equalizer by $2.2$ dB and $5.4$ dB in uncoded and coded communications, respectively, at a BER of $10^{-2}$. The ICED technique offers a substantial further improvement in performance, with the BER of ICED-based ECSIR degrading only marginally compared to VSSD-based PCSIR. 
		
		In terms of the relative performance of the different waveforms, in uncoded communications under PCSIR, with a $1$-tap equalizer,\footnote{A $1$-tap equalizer performs subcarrier-by-subcarrier equalization, followed by data detection.} ODSS delivers the best performance, followed by OCDM, OTFS, and OFDM. With the VSSD equalizer, ODSS, OCDM, and OTFS exhibit the same performance but outperform OFDM by a large margin.	In coded communications, under ECSIR,  with MMSE-based LLRs, the performance gap between OFDM and other waveforms significantly reduces. Finally, with VSSD-based LLRs, the BER gap between all the waveforms disappears completely.
		
		Thus, we conclude that different waveforms offer differentiated performances when the receiver complexity is constrained and the channels are harsh. With high-quality soft-symbol estimates and channel coding, or when the channel conditions are more benign, all waveforms perform equally well.
		
		\emph{Notation:} Matrices and vectors are denoted by bold uppercase and bold lowercase letters, respectively. $(\cdot)^T$, $(\cdot)^H$, $(\cdot)^*, \mathrm{Tr(\cdot)}$, $\|\cdot\|$, $\|\cdot\|_{\mathrm F}$, $ \otimes$, $\odot$, $\lfloor \cdot \rfloor$ and $\mathrm{vec}(\cdot)$ represent the transpose, Hermitian, complex conjugate, trace, $\ell_2$ norm, Frobenius norm, Kronecker product, Hadamard product, floor, and vectorization operations, respectively. $\mathbb{E}[\cdot]$ denotes the expectation operator, while $\langle \cdot \rangle_{q(\cdot)} $ represents expectation with respect to the distribution $q(\cdot)$. If the operand is a matrix, $\mathrm{diag}[\cdot] $ denotes a vector containing the diagonal elements of the matrix; if the operand is a vector, it represents a square diagonal matrix with the elements of the vector along the main diagonal. $\mathcal {CN}(\boldsymbol \mu, \boldsymbol \Sigma)$ denotes a circularly symmetric complex Gaussian distribution with mean $\boldsymbol \mu$ and covariance matrix $\boldsymbol \Sigma$, $\Gamma (\lambda_1,\lambda_2)$ denotes Gamma distrbution with shape parameter $\lambda_1$ and rate parameter $\lambda_2$, $\mathcal{U}(\varphi_1,\varphi_2)$ denotes uniform distribution with support $[\varphi_1,\varphi_2]$. $\eta^{\boldsymbol \zeta}$,$\frac{\partial \boldsymbol \zeta}{\partial \eta}$, and $\frac{\partial^2 \boldsymbol \zeta}{\partial\eta^2}$ denote element-wise power, first-derivative, and second-derivative of the vector $\boldsymbol \zeta$ with respect to a scalar $\eta$, respectively. We use $\mathbf{F}^{\cdot \beta}$ to denote an element-wise power of the entries of $\mathbf{F}$. 
		
		\section{System Model} \label{sec:system_model}
	
	We consider data transmission in frames, as shown in Fig.~\ref{fig:data_packet}. Each frame comprises a preamble period of duration $T_{\mathrm p}$, a guard interval (with no transmission) of duration $T_0$, and a data period of duration $T_{\mathrm d}$ seconds. Since the channel we consider is DS-spread, we need to use a guard interval to limit the interference between the preamble and data, and thereby facilitate CE~\cite{Berger}. The data period is used to transmit data symbols that modulate an appropriate waveform. The transmitter and receiver architectures for all the waveforms considered in this work can be depicted using the single block diagram shown in Fig.~\ref{fig:mod_block}. The system bandwidth is $B$~Hz.
	\begin{figure}
		\centering
		\includegraphics[width=0.25\textwidth]{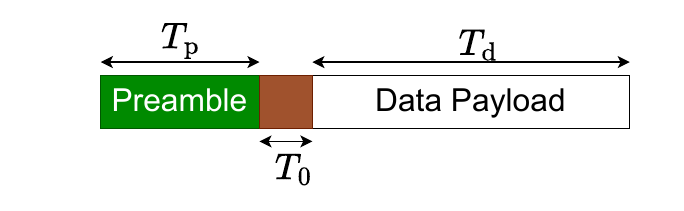} 
		\caption{Transmitted frame structure.}
		\label{fig:data_packet}
	\end{figure}
	\begin{figure}
		\centering
		\includegraphics[scale=0.38]{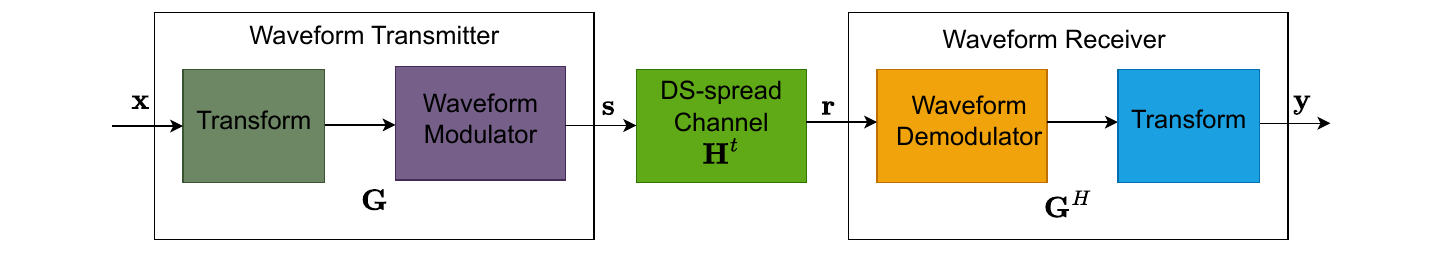} 
		\caption{Block diagram for the system model considered in the paper.}
		\label{fig:mod_block}\vspace{-0.2cm}
	\end{figure}
	\subsection{Transmitted Data Signal}
	 The transmitter block, denoted by the matrix $\mathbf G$, consists of a waveform-dependent transform followed by a modulator that maps the data symbols $\mathbf{x}$ to a sequence of samples $\mathbf{s}$ to be transmitted over the DS-spread channel. The receiver operation, denoted by the matrix $\mathbf G^H$, comprises a corresponding demodulator followed by an inverse transform. In the following subsections, we explicitly specify the matrix $\mathbf{G}$ for the OTFS, OFDM, OCDM, and ODSS waveforms. 
	\subsubsection{OTFS Waveform}
	We consider an $M\times N$ OTFS system with subcarrier spacing $\Delta f=B/M$ and total symbol duration of $T_s=NT$, where $T=\frac{1}{\Delta f}$. The transmitted DD-domain data symbol matrix $\mathbf X_{\mathrm {DD}} \in \mathbb{Q}^{M\times N}$ is converted to the time-frequency (TF) domain matrix $\mathbf X_{\mathrm {TF}} \in \mathbb{C}^{M\times N}$ using the inverse symplectic finite Fourier transform as \cite{InterferenceCancelRaviteja}
	\begin{equation*}
	\mathbf X_{\mathrm{TF}}=\mathbf F_M \mathbf X_{\mathrm{DD}} \mathbf F_N^H,
	\end{equation*}
	where $\mathbb{Q}$ denotes the complex $Q-$QAM constellation, and $\mathbf F_M \in \mathbb{C}^{M\times M}$ and $\mathbf F_N \in \mathbb{C}^{N\times N}$ are unitary DFT matrices. The Heisenberg transform converts the TF domain data symbols $X_{\mathrm{TF}}[m,n]$ into the continuous-time domain (TD) passband transmitted signal $s_{\mathrm T}(t)$ using a time-limited ($t \in [0,T]$) pulse-shaping filter $g_{\mathrm{tx}}(t)$ as
	\begin{equation}
	s_{\mathrm T}(t)=\sum_{m=0}^{M-1}\sum_{n=0}^{N-1}X_{\mathrm{TF}}[m,n]g_{\mathrm{tx}}(t-nT)e^{j2\pi f_{m}(t-nT)},\label{eq:vec_s}
	\end{equation}
	where the center frequency of $m^{\mathrm{th}}$ subcarrier $f_m=f_0+m\Delta f$ with $f_0$ be the center frequency of the lowest subcarrier. In discrete-time, we can write~\eqref{eq:vec_s} in a matrix-vector form as
\begin{equation*}
	\mathbf s_{\mathrm T}=\mathrm{vec}(\widetilde {\mathbf G}_{\mathrm{T}}	\mathbf F_M^H\mathbf X_{\mathrm{TF}})=(\mathbf F_N^H \otimes \widetilde{\mathbf G}_{\mathrm{T}})\mathbf x=\mathbf G_{\mathrm T}\mathbf x,
	\end{equation*}
	where ${\widetilde{\mathbf G}} _{\mathrm T}=\mathrm{diag}[\widetilde{g}_{\mathrm{T}}[0],\widetilde{g}_{\mathrm{T}}[1],...,\widetilde{g}_{\mathrm{T}}[M-1]]\in \mathbb{C}^{M\times M}$ with $\widetilde{g}_{\mathrm{T}}[m]=g_{\mathrm{tx}}(\frac{mT}{M})e^{j2\pi f_0\frac{mT}{M}}$ for $m=0,1,...,M-1$, and $\mathbf x=\mathrm{vec}(\mathbf X_{\mathrm {DD}})$. Thus, the transmitter matrix for OTFS is $\mathbf G_{\mathrm T}\triangleq\mathbf F_N^H \otimes \widetilde{\mathbf G} _{\mathrm T} \in \mathbb{C}^{MN\times MN}$.
	
In OFDM, we mount the symbols in the TF-domain \cite{hong2022delay}, and the discrete-time transmitted signal is 
	\begin{equation*}
	\mathbf s_{\mathrm F}=(\mathbf I_N^H \otimes \widetilde{\mathbf G}_{\mathrm{T}}\mathbf F_M^H)\mathbf x=\mathbf G_{\mathrm F}\mathbf x,
	\end{equation*}
	where $\mathbf x = \mathrm{vec}(\mathbf X_{\mathrm{TF}})$, and $\mathbf G_{\mathrm F}\triangleq \mathbf I_N^H \otimes \widetilde{\mathbf G}_{\mathrm{T}}\mathbf F_M^H\in \mathbb{C}^{MN\times MN}$.
	
		\subsubsection{OCDM Waveform}
	For an OCDM system with $M$ subcarriers occupying a subcarrier bandwidth $\Delta f$ within the total bandwidth $B=M\Delta f$, the core orthogonal chirp basis for $m^{\mathrm{th}}$ subcarrier can be expressed as $\psi _m(t)=e^{j\frac{\pi}{4}}e^{-j\pi\frac{M}{T^2}(t-m\frac{T}{M})^2}, t \in [0,T]$, considering $M$ to be even~\cite{ocdm}. Let $N$ time domain OCDM symbols be transmitted over a total duration of $T_s=NT$. The continuous-time passband transmitted signal is
	\begin{equation}
	s_{\mathrm C}(t)=\sum_{m=0}^{M-1}\sum_{n=0}^{N-1} X_{\mathrm C}[m,n] \psi_m(t-nT)e^{j2\pi f_ct},
	\end{equation}
	where $X_{\mathrm C}[m,n]$ denotes the data symbol mounted on the $m^{\mathrm{th}}$ subcarrier in the $n^{\mathrm{th}}$ symbol, and $f_c$ denotes the center frequency. Similar to OTFS, the discrete-time transmitted signal can be written in matrix-vector form as $\mathbf s _{\mathrm C}=\mathbf G_{\mathrm C}\mathbf x$, where $\mathbf x =\mathrm{vec}(\mathbf X_{\mathrm C} \in \mathbb{Q}^{M\times N})\in \mathbb{C}^{MN\times1}$. The transmitter matrix for OCDM is $\mathbf G_{\mathrm C}\triangleq \mathbf I_{N} \otimes \widetilde{\mathbf G}_{\mathrm C}\boldsymbol \Psi\in \mathbb{C}^{MN \times MN}$, where $
	\widetilde{\mathbf G}_{\mathrm C}=\mathrm{diag}[1,e^{j2\pi f_c \frac{T}{M}},...,e^{j2\pi f_c (M-1)\frac{T}{M}}] \in \mathbb{C}^{M\times M}$, and the matrix $\boldsymbol \Psi \in \mathbb{C}^{M\times M}$ is constructed using the chirp basis, with $(m',m)$-th entry $e^{j\frac{\pi}{4}}e^{-j\pi\frac{1}{M}(m'-m)^2}, \, m,m'=0,1,...,M-1$.
	\subsubsection{ODSS Waveform}
	For an ODSS system with $M$ subcarriers, we first choose a constant $q$ and a base subcarrier-width $W$ such that $B=\sum_{m=0}^{M-1}q^mW$ \cite{odss}. The $m$th subcarrier occupies a bandwidth $q^mW$ and a time duration $\frac{1}{q^mW}$. Thus, $N(m)=\lfloor q^m W T_s \rfloor$ symbols can be accommodated on the $m$th subcarrier in a time duration of $T_s$. Here, the data symbols are multiplexed onto the 2D Mellin-Fourier (MF) domain, which is mapped into the DS-domain using the ODSS transform matrix $\mathbf{T}_D \in \mathbb{C}^{M_{\mathrm {tot}}\times M_{\mathrm {tot}}}$ \cite[Eq.~(39)]{odss} as
	\begin{equation}
	\mathbf X_{\mathrm{SD}}= \mathbf{T}_D \mathbf x,\label{eq:tx_odss1}
	\end{equation}
	where $M_{\mathrm {tot}}=\sum_{m=0}^{M-1}N(m)$. The vectors $\mathbf x\in \mathbb{Q}^{M_{\mathrm{tot}}}$ and $\mathbf X_{\mathrm{SD}}\in \mathbb{C}^{M_{\mathrm{tot}}}$ are vectorized versions of the data symbol matrix in the MF- and DS-domains, respectively. The ODSS modulator converts the DS-domain data symbols, $X_{\mathrm{SD}}[n,m]$, to a continuous TD transmitted passband signal, $s_{\mathrm D}(t)$, using a time-limited pulse shaping filter, $g_{\mathrm{tx}}(t),t\in[0,\frac{1}{W}]$, as
	\begin{multline}
	s_{\mathrm D}(t)=\sum_{m=0}^{M-1}\sum_{n=0}^{N(m)-1}\!\! \!\!X_{\mathrm{SD}}[n,m] q^{\frac{m}{2}}g_{\mathrm{tx}} \left(q^{m} \left(t-\frac{n}{q^{m}W}\right)\right)\\
	\times e^{j2\pi f_0^Dq^{m}\left(t-\frac{n}{q^{m}W}\right)},\label{eq:tx_odss2}
	\end{multline}
	where $f_0^D$ is the center frequency of the lowest ODSS subcarrier. Similar to OTFS, \eqref{eq:tx_odss1} and~\eqref{eq:tx_odss2} can be written in matrix-vector form as $\mathbf s_{\mathrm D}=\mathbf G_{\mathrm D} \mathbf x$, where ODSS transmitter matrix $\mathbf G_{\mathrm D}\in \mathbb{C}^{M_{\mathrm {tot}}\times M_{\mathrm {tot}}}$ is constructed as in \cite[Eq.~(16)]{vbmc}.
	
	\subsection{The DS-Spread Channel Model}
	The received passband signal after propagation through a $P$-path DS-spread channel is given by~\cite{dsbook}
	\begin{equation}
	r(t) = \sum_{p=1}^{P} h_{p} \sqrt{\alpha_{p}} s(\alpha_{p}(t-\tau_{p}))+w(t), \label{eq:rx}
	\end{equation}
	where $w(t)$ is the complex additive white Gaussian noise (AWGN), and the tuple $(h_p,\tau_p,\alpha_p)$ contains the complex channel gain, delay, and time-scale parameters associated with the $p^{\mathrm{th}}$ scattering path. We assume that the channel delays and time-scales are bounded, i.e., $\tau_p \in [0,\tau_{\text{max}}]$ and  $\alpha_p \in [\frac{1}{\alpha_{\text{max}}},\alpha_{\text{max}}]$, where $\tau_{\max}\ge 0$ and $\alpha_{\max} \ge 1$ are the channel delay and scale spreads, respectively. Furthermore, in practical DS-spread channels \cite{discreteTimeScale}, the delay values are uniformly distributed over the range $[0,\tau_{\text{max}}]$, while the scale values are uniformly distributed in the logarithmic scale, i.e., $\ln \alpha_p$ is uniformly distributed over $[- \ln \alpha_{\max}, \ln  \alpha_{\max}]$. From~\eqref{eq:rx}, the received signal can be reformulated as 
	\begin{multline}
		r(t) = \sum_{p=1}^{P} h_{p} \sqrt{\alpha_{p}} \mathcal{F}^{-1}\left (\mathcal{F} \left( s(\alpha_{p}(t-\tau_{p}))\right)\right)+w(t) \\
		=\!\! \sum_{p=1}^{P} \! h_{p} \sqrt{\alpha_{p}} \mathcal{F}^{-1} \!\left(\! \frac{1}{\alpha_p}e^{-j2\pi f\tau_p}\!\!\!\int \!\!s(t)e^{-j2\pi \frac{f}{\alpha_p}t} dt \! \right)\!+\!w(t),\!\!\label{eq:rx_FT}
		\end{multline}
	where $\mathcal{F}$ and $\mathcal F^{-1}$ are the Fourier and inverse-Fourier transform operations, respectively. Using \eqref{eq:rx_FT}, the sampled vector of the received TD signal, after discarding the cyclic prefix (CP), can be derived (similar to \cite[Eq.~(16)]{otfs_rx_ft}) as: 
	\begin{equation}
	\mathbf r = \mathbf H^t \mathbf s +\mathbf w^t,\label{eq:IpOpRel}	\end{equation}
where $\mathbf H^t\triangleq \sum_{p=1}^{P}h_p\sqrt{\alpha_p}\mathbf F_{M_{\mathrm d}}^H \mathbf \Gamma_{p} \mathbf F_{M_{\mathrm d}}^{\cdot \frac{1}{\alpha_p}} \in \mathbb{C}^{M_{\mathrm d} \times M_{\mathrm d}}$ is the TD channel matrix and $\mathbf w^t\in \mathbb{C}^{M_{\mathrm d} }$ is the TD AWGN vector. The matrices $\mathbf \Gamma_{p}=\frac{1}{\alpha_p}\mathrm{diag}[e^{-j2\pi \mathbf f \tau_p}] \in \mathbb{C}^{M_{\mathrm d} \times M_{\mathrm d}}$, and $\mathbf F_{M_d}=\frac{1}{\sqrt{M_{\mathrm d}}}e^{-j2\pi \mathbf f \mathbf t^T}$, where $\mathbf f=f_L+\frac{B}{M_{\mathrm d}}[0,...,M_{\mathrm d}-1]^T$ with lower end frequency $f_L$, and $\mathbf t=\frac{T_s}{M_{\mathrm d}}[0,...,M_{\mathrm d}-1]^T$. 
	
	In the sequel, the total number of transmitted data symbols is denoted by $M_{\mathrm d}$, which is equal to $MN$ for OTFS, OFDM and OCDM and $M_{\mathrm {tot}}$ for ODSS.
	
	\subsection{Demodulated Data Symbols}
	At the receiver, we invert all the operations performed at the transmitter side to get the data symbols in the corresponding domain for all the waveforms. Considering the received pulse shaping filter is the same as the transmitted pulse shaping filter, we get the demodulated data symbol vector for all the waveforms in a unified form as
	\begin{equation}
	\mathbf y= \mathbf G^H \mathbf r= \mathbf G^H\mathbf H^t \mathbf G \mathbf x+\mathbf G^H\mathbf w^t= \mathbf H \mathbf x + \mathbf w, \label{eq:ip_op_eff}
	\end{equation}
	where $\mathbf G \in \mathbb{C}^{M_{\mathrm{d}}\times M_{\mathrm{d}}}$ is the transmitter matrix corresponding to a waveform; it is denoted by $\mathbf G_{\mathrm T}$, $\mathbf G_{\mathrm F}$, $\mathbf G_{\mathrm C}$, and $\mathbf G_{\mathrm D}$ for OTFS, OFDM, OCDM, and ODSS, respectively. Also, $\mathbf w \triangleq \mathbf G^H \mathbf w^t \in \mathbb{C}^{M_{\mathrm{d}}}$ is the AWGN,  with $\mathbf w \sim \mathcal{CN} (\mathbf 0, \sigma_{\mathrm d}^2\mathbf I_{M_{\mathrm d}})$. Further, $\mathbf H \in \mathbb{C}^{M_{\mathrm{d}}\times M_{\mathrm{d}}}$ is the \emph{effective channel matrix} with
	\begin{equation}
	\mathbf H \triangleq\mathbf G^H\mathbf H^t \mathbf G. \label{eq:Ch_mat_eff}
	\end{equation}
	Note that \eqref{eq:Ch_mat_eff} represents the channel in a domain that depends on the modulation waveform: $\mathbf H$ is in the TF, DD, and MF domains for OFDM (OCDM), OTFS, and ODSS, respectively.

	\subsection{Preamble-Based CE Problem}
 	We use a preamble signal of short duration $T_{\mathrm p}$ as pilots to estimate the DS-spread channel. The preamble consists of $M_{\mathrm p}$ known symbols mounted on a waveform. To facilitate CE, we can rewrite the received pilot symbols using  \eqref{eq:ip_op_eff} as
	\begin{equation}
	\mathbf y_{\mathrm p}=\Apalpha (\boldsymbol{\tau},\boldsymbol{\alpha}) \mathbf h +\mathbf w_{\mathrm p}\label{eq:pilot_model},
	\end{equation}
	where $\mathbf h=[h_1,...,h_P]^T\in \mathbb{C}^{P},\boldsymbol{\tau}=[\tau_1,...,\tau_P]^T\in \mathbb{R}^{P}$, and $\boldsymbol{\alpha}=[\alpha_1,...,\alpha_P]^T\in \mathbb{R}^{P}$ contain the true channel gains, delays and scale values, respectively. The matrix $\Apalpha(\boldsymbol{\tau},\boldsymbol{\alpha}) \in \mathbb{C}^{M_{\mathrm p}\times P}$ is constructed using \eqref{eq:Ch_mat_eff} as $\Apalpha(\boldsymbol{\tau},\boldsymbol{\alpha})=[\mathbf a_{\mathrm p,\alpha}(\tau_1,\alpha_1),\ldots ,\mathbf a_{\mathrm p,\alpha}(\tau_P,\alpha_P)],$	where
	\begin{equation}
	\mathbf a_{\mathrm p,\alpha}(\tau_p,\alpha_p)=\sqrt{\alpha_p}\mathbf G_{\mathrm p}^H\mathbf F_{M_{\mathrm p}}^H \mathbf \Gamma_{\mathrm p, p} \mathbf F_{M_{\mathrm p}}^{\cdot \frac{1}{\alpha_p}}\mathbf G_{\mathrm p} \mathbf x_{\mathrm p}, \label{eq:meas_mat}
	\end{equation}
	for $p=1, 2, \ldots, P$, where $\mathbf G_{\mathrm p}\in \mathbb{C}^{M_{\mathrm p} \times M_{\mathrm p}}$ is the waveform-dependent transmitter matrix used in the preamble, $\mathbf \Gamma_{\mathrm p, p}\triangleq \frac{1}{\alpha_p}\mathrm{diag}[1,e^{-j2\pi\frac{B}{M_{\mathrm p}}\tau_p},\ldots ,e^{-j2\pi\frac{(M_{\mathrm p}-1)B}{M_{\mathrm p}}\tau_p}] \in \mathbb{C}^{M_{\mathrm p} \times M_{\mathrm p}}$, $\mathbf x_{\mathrm p}\in \mathbb{Q}^{M_{\mathrm p}}$ is the pilot symbol vector, and $\mathbf w_{\mathrm p}\sim \mathcal{CN} (\mathbf 0, \sigma_{\mathrm p}^2\mathbf I_{M_{\mathrm p}})$ is the AWGN.
	
	The advantage of reformulating the received pilot symbols as \eqref{eq:pilot_model} is that it represents the UWA DS-spread channel \emph{in the domain in which it exhibits sparsity, namely, the DS domain.} With \eqref{eq:pilot_model} in hand, we can now use sparsity-promoting techniques, such as VB, to estimate the channel regardless of which waveform is used to generate the preamble signal.
Specifically, to estimate the DS-spread channel, we need to determine the number of channel paths and each channel path's delay, scale, and gain parameters. However, from \eqref{eq:meas_mat}, we see that the relationship between the DS channel parameters $(\tau_p,\alpha_p)_{p=1}^{P}$ and measurement matrix $\Apalpha (\boldsymbol {\tau},\boldsymbol {\alpha})$ is highly non-linear, which makes jointly estimating them from the pilot measurements in \eqref{eq:pilot_model} challenging.
We address this in the following section.
	\section{Off-Grid DS-Spread CE Framework}\label{sec:offgrid_framework}
We adopt an iterative approach to estimate the channel parameters $(h_p,\tau_p,\alpha_p)_{p=1}^{P}$. We first form a coarse sampling grid of the DS parameters and compute the corresponding measurement matrix using \eqref{eq:meas_mat}. Then, each iteration involves the following two steps: (a) Estimation stage: we apply the VB technique to estimate the delay and scale parameters on the sampling grid along with the corresponding channel gain values. 
(b) Refinement stage: we refine the delay and scale parameters and update the corresponding grid points as well as the dictionary matrix.
	
Recall that the continuous-valued delay and scale parameters span $[0,\tau_\max]$ and $[\frac{1}{\alpha_\max},\alpha_\max]$, respectively. A commonly adopted approach to estimating these parameters is to form a finite-sized grid in the DS domain. Then, considering these grid points as candidate channel path parameters, we look for the sparsest possible subset of the grid points and corresponding path gains so that \eqref{eq:pilot_model} is satisfied. To elaborate, let $N_{\tau}$ and $M_{\alpha}$ denote the size of the sampling grid along the delay and scale axes, respectively. Let $\bar{\boldsymbol \tau}=[\bar{\tau}_0,\ldots,\bar{\tau}_{N_{\tau}M_{\alpha}-1}]^T\in \mathbb{R}^{N_{\tau}M_{\alpha}}$ represent the grid points along the delay axis, with $\bar{\tau}_0,  \ldots , \bar{\tau}_{N_{\tau}M_{\alpha}-1} \in [0, \tau_{\max}]$, and $\bar{\boldsymbol \alpha}=[\bar{\alpha}_0,\ldots,\bar{\alpha}_{N_{\tau}M_{\alpha}-1}]^T\in \mathbb{R}^{N_{\tau}M_{\alpha}}$ represent the grid points along the scale axis, with $\bar{\alpha}_0,\ldots,\bar{\alpha}_{N_{\tau}M_{\alpha}-1} \in [1/\alpha_{\max}, \alpha_{\max}]$. In correspondence with the respective distributions of the delay and scale values~\cite{discreteTimeScale}, we sample the delay axis linearly and the scale axis geometrically. Due to this, instead of considering a grid in the scale axis, we consider an equivalent log-scale version, $\bar{\boldsymbol \omega}=[\bar{\omega}_0,...,\bar{\omega}_{N_{\tau}M_{\alpha}-1}]^T\in \mathbb{R}^{N_{\alpha}M_{\alpha}}$, with $\bar{\omega}_i \triangleq \ln \bar \alpha_i / \ln q_\alpha$, where $q_\alpha$ is a constant (that converts the logarithm from base $e$ to base $q_\alpha$).
	
In addition, note that the matrix $\Apalpha(\boldsymbol{\tau},\boldsymbol{\alpha})$ in \eqref{eq:pilot_model} can also be written as a function of $\boldsymbol \omega$ as $\mathbf A_{\mathrm p,\omega}(\boldsymbol{\tau},\boldsymbol{\omega})=\Apalpha(\boldsymbol{\tau},q_{\alpha}^{\boldsymbol{\omega}})$. Thus, in the estimation stage, we use the pilot symbols to formulate an SSR problem (on the sampling grid) as
	\begin{equation}
	\mathbf y_{\mathrm p}=\Apomega(\bar{\boldsymbol \tau},\bar{\boldsymbol \omega})\bar{\mathbf h}+\mathbf w_{\mathrm p}, \label{eq:ssr_model}
	\end{equation}
	where the bar above the variable (e.g., $\bar{\boldsymbol{\tau}}$) is used to denote variable parameters corresponding to the dictionary matrix; variables without the bar (e.g., $\boldsymbol{\tau}$) represent the true channel parameters. Also, $\Apomega(\bar{\boldsymbol \tau},\bar{\boldsymbol \omega}) \in \mathbb{C}^{M_{\mathrm{p}} \times N_\tau M_\alpha}$ denotes the matrix  $\Apalpha ({\boldsymbol \tau},{\boldsymbol \alpha})$ in \eqref{eq:pilot_model} evaluated using the grid-based delay parameters $\bar {\boldsymbol {\tau}}$ and scale parameters $q_\alpha ^{\bar{ \boldsymbol{ \omega}}}$ instead of $\boldsymbol{\tau}$ and $\boldsymbol{\alpha}$, respectively. As mentioned above, initially, we consider a uniformly spaced sampling grid along the delay axis with the delay resolution $r_{\tau}=\frac{\tau_{\max}}{N_{\tau}}$ and spanning the interval $[0,\tau_{\max}]$. For sampling the scale axis, we consider $M_{\alpha}$ to be odd (the extension to the case of even $M_\alpha$ is straightforward) and choose $q_\alpha$ such that $\alpha_\max=q_\alpha^{\frac{M_\alpha-1}{2}}$, to obtain an equally spaced sampling grid along the log-scale axis with resolution $r_\omega=1$. Thus, we get the \emph{initial} delay grid vector $\bar{\boldsymbol {\tau}}^{(0)}=[\bar\tau_0^{(0)},...,\bar \tau_{N_{\tau}M_{\alpha}-1}^{(0)}]\in \mathbb{R}^{N_{\tau}M_{\alpha}}$ and the log-scale grid vector $\bar{\boldsymbol \omega}^{(0)}=[\bar\omega_0^{(0)},...,\bar\omega_{N_{\tau}M_{\alpha}-1}^{(0)}]\in \mathbb{R}^{N_{\tau}M_{\alpha}}$, where $\bar\tau_{n'M_{\alpha}+m'}^{(0)}=n'r_{\tau}$ and $\bar\omega_{n'M_{\alpha}+m'}^{(0)}=\left(-\frac{M_\alpha-1}{2}+m'\right) r_{\omega}$, $n'\in\{0,...,N_{\tau}-1\}$ and $m'\in\{0,...,M_{\alpha}-1\}$. 
	
	Note that $\bar{\mathbf{h}} \in \mathbb{C}^{N_\tau M_\alpha}$ is likely to be close to a sparse vector since the channel contains only $P \ll N_\tau M_\alpha$ physical paths, and only components of $\bar{\mathbf{h}}$ that correspond to these paths will be nonzero. Hence, SSR methods can be applied to solve for $\bar{\mathbf{h}}$ using \eqref{eq:ssr_model}. However, the true delay and scale values may not lie on the grid values used to construct the dictionary matrix $\Apomega(\bar{\boldsymbol \tau},\bar{\boldsymbol \omega})$, leading to the estimated $\bar {\mathbf h}$ being only \emph{approximately sparse}. This mismatch between the true channel parameters and the grid values can be alleviated by performing \emph{dictionary refinement} to update the dictionary matrix, where we update the grid by estimating the off-grid delay and scale parameters of the DS-spread channel. 
	
	Our proposed off-grid DS-spread CE technique works as follows. First, starting from the dictionary matrix $\Apomega(\bar{\boldsymbol \tau}^{(0)},\bar{\boldsymbol \omega}^{(0)})$, we estimate the sparse channel vector $\bar{\mathbf h}$ using a VB-based technique (see Sec.~\ref{sec:vb_algo}). Next, in the dictionary refinement step, we use the estimated $\bar{\mathbf h}$ to update the grid points $\bar{\boldsymbol \tau}$ and $\bar{\boldsymbol \omega}$. We update $\bar{\boldsymbol \tau}$ and $\bar{\boldsymbol \omega}$ using either a first-order approximation of the basis of the dictionary matrix or a second-order approximation of an objective function and use the updated grid to recompute the dictionary matrix. Note that, to reduce complexity, once $\bar{\mathbf h}$ is estimated, we can truncate entries of $\bar{\mathbf h}$ whose magnitude is below a threshold to zero and remove them from $\bar{\mathbf h}$, resulting in a truncated vector $\widetilde{\mathbf h} \in \mathbb{C}^{\hat{P}}$, where $\hat{P}$ denotes estimated number of paths. Correspondingly, we truncate $\bar{\boldsymbol \tau}$ and $\bar{\boldsymbol \omega}$, and denote the truncated versions by $\widetilde{\boldsymbol \tau}\in \mathbb{R}^{\hat P}$, $\widetilde{\boldsymbol \omega}\in \mathbb{R}^{\hat P}$, respectively.
	Next, with the refined dictionary in hand, we re-estimate $\bar{\mathbf h}$, and repeat the process iteratively. For clarity of presentation, before describing the overall recipe, we first briefly discuss the dictionary refinement procedure in the following two subsections.
	
	\subsection{First-Order Approximation (FOA) of the Basis}\label{sec:FOA}
	 Due to the finite (and possibly coarse) sampling grid, the true delay and scale parameters need not lie on grid points. Hence, we seek to refine the grid points progressively so that the true channel parameters eventually align with the grid. We accomplish this by estimating the off-grid components of the parameters and using them to recompute the dictionary matrix. Now, given the $\widetilde{\boldsymbol{\tau}}$ and $\widetilde{\boldsymbol{\omega}}$ returned by the VB algorithm (after the thresholding step), let $\boldsymbol{\beta_{\widetilde \tau}}\in \mathbb{R}^{ \hat P}$ and $\boldsymbol{\beta_{\widetilde \omega}}\in \mathbb{R}^{ \hat P}$ denote the additive correction to be applied to the delay and Doppler grid point values, respectively. Then, using a first-order linear approximation of  $\Apomega(\widetilde{\boldsymbol \tau},\widetilde{\boldsymbol \omega})\in \mathbb{C}^{M_{\mathrm p}\times \hat P}$, we can write
	
	\begin{align}
	\Apomega(\widetilde{\boldsymbol \tau},\widetilde{\boldsymbol \omega})=\widetilde{\mathbf A}_{\mathrm p,\omega}^{(0)}+\mathbf B_{\mathrm p}\text{ diag}[\boldsymbol{\beta_{\widetilde \tau}}]+\mathbf C_{\mathrm p}\text{ diag}[\boldsymbol{\beta_{\widetilde\omega}}],\label{eq:dic_update1}
	\end{align}
	where the matrices $\widetilde{\mathbf A}_{\mathrm p,\omega}^{(0)}=\Apomega(\widetilde{\boldsymbol \tau}^{(0)},\widetilde{\boldsymbol \omega}^{(0)})\in \mathbb{C}^{M_{\mathrm p}\times \hat P}$, $\mathbf B_{\mathrm p}=[\mathbf b_{\mathrm p} (\widetilde \tau_0^{(0)},\widetilde \omega_0^{(0)}),\ldots,\mathbf b_{\mathrm p} (\widetilde\tau_{\hat P-1}^{(0)},\widetilde \omega_{\hat P-1}^{(0)})]\in \mathbb{C}^{M_{\mathrm p}\times \hat P}$ with $\mathbf b_{\mathrm p}(\tau,\omega)=\frac{\partial{\mathbf a_{\mathrm p,\omega}(\tau,\omega)}}{\partial{\tau}}\in \mathbb{C}^{M_{\mathrm p}}$. Also, the matrix $\mathbf C_{\mathrm p}=[\mathbf c_{\mathrm p} (\widetilde \tau_0^{(0)},\widetilde \omega_0^{(0)}),\ldots,\mathbf c_{\mathrm p} (\widetilde \tau_{\hat P-1}^{(0)},\widetilde \omega_{\hat P-1}^{(0)})]\in \mathbb{C}^{M_{\mathrm p}\times \hat P}$ with $\mathbf c_{\mathrm p}(\tau,\omega)=\frac{\partial{\mathbf a_{\mathrm p,\omega}(\tau,\omega)}}{\partial{\omega}}\in \mathbb{C}^{M_{\mathrm p}}$, and $\mathbf a_{\mathrm p,\omega}(\tau,\omega)$ is a column of $\mathbf A_{\mathrm p,\omega}(\boldsymbol{\tau},\boldsymbol{\omega})$ computed using parameters $\tau$ and $\omega$. The vectors $\boldsymbol{\beta_{\widetilde\tau}}=[\beta_{\widetilde\tau_0},\ldots,\beta_{\widetilde\tau_{\hat P-1}}]^T$ and $\boldsymbol{\beta_{\widetilde \omega}}=[\beta_{\widetilde \omega_0},\ldots,\beta_{\widetilde \omega_{\hat P-1}}]^T$ represent the vectors containing the off-grid components. We will use \eqref{eq:dic_update1} along with $\mathbf{y}_{\mathrm{p}}$ in~\eqref{eq:ssr_model} to obtain $\boldsymbol{\beta}_{\widetilde \tau}^{(j)}$ and $\boldsymbol{\beta}_{\widetilde \omega}^{(j)}$, the estimates of $\boldsymbol{\beta_{\widetilde \tau}}$ and $\boldsymbol{\beta_{\widetilde \omega}}$, respectively, in the $j$th iteration (see Sec.~\ref{sec:foa_method} for explicit expressions.) Then, we update the $\hat P$ delay and log-scale grid points at the $(j+1)$th iteration as
	\begin{eqnarray}
	\widetilde{\boldsymbol \tau}^{(j+1)}&=&\widetilde{\boldsymbol \tau}^{(j)}+\boldsymbol \beta_{\widetilde \tau}^{(j)}, \text{ and}\label{eq:tau_update1}\\
	\widetilde{\boldsymbol \omega}^{(j+1)}&=&\widetilde{\boldsymbol \omega}^{(j)}+\boldsymbol \beta_{\widetilde \omega}^{(j)}\label{eq:alpha_update1}.
	\end{eqnarray}
	\subsection{Second-Order Approximation (SOA) of Objective Function} \label{sec:soa_approximation_obj}
	Based on \eqref{eq:ssr_model}, we define the objective function to be minimized as $f(\widetilde{\boldsymbol{\tau}},\widetilde{\boldsymbol{\omega}})=\|\mathbf y_{\mathrm p}-\Apomega(\widetilde{\boldsymbol{\tau}},\widetilde{\boldsymbol{\omega}})\widetilde{\mathbf h}\|^2$. We update $\widetilde{\boldsymbol \tau}$ and $\widetilde{\boldsymbol \omega}$ element-wise, in a coordinate-descent manner. Let $\widetilde{\boldsymbol \tau}_{\neq l}\in \mathbb{R}^{(\hat P-1)}$ and $\widetilde{\boldsymbol \omega}_{\neq l}\in \mathbb{R}^{(\hat P-1)}$ denote the vector $\widetilde{\boldsymbol \tau}$ and $\widetilde{\boldsymbol \omega}$, respectively, after deleting the $l$th element. We define the objective function for the $l$th delay parameter at the $(j+1)$th update as $f_{\widetilde \tau_l}(\tau)=\|\mathbf y_{\mathrm p}-\Apomega(\widetilde{\boldsymbol{\tau}}^{(j)}_{\neq l},\widetilde{\boldsymbol{\omega}}^{(j)}_{\neq l})\widetilde{\mathbf h}_{\neq l}-{\mathbf a}_{\mathrm p,\omega}(\tau,\widetilde{\omega}_l^{(j)})\widetilde h_l\|^2$, and we minimize it with respect to $\tau$ to get the $(j+1)$th update of $\widetilde \tau_l$. 
	Using a second-order Newton's iteration to optimize the element-wise objective functions, we obtain the updates for $(\widetilde{\boldsymbol \tau}^{(j+1)},\widetilde{\boldsymbol \omega}^{(j+1)})$ as
	\begin{align}
	&\widetilde{\tau}_l^{(j+1)}=\widetilde{\tau}_l^{(j)}-\frac{f_{\widetilde \tau_l}'(\widetilde{\tau_l}^{(j)})}{f_{\widetilde \tau_l}''(\widetilde{\tau_l}^{(j)})},\label{eq:Newton1} \text{ and}\\
	&\widetilde{\omega}_l^{(j+1)}=\widetilde{\omega}_l^{(j)}-\frac{f_{\widetilde \omega_l}'(\widetilde{\omega_l}^{(j)})}{f_{\widetilde \omega_l}''(\widetilde{\omega_l}^{(j)})}, \label{eq:Newton2}
	\end{align}
	respectively, where $f_{\widetilde \tau_l}'({\widetilde\tau_l}^{(j)})=
	\frac{\partial f_{\widetilde \tau_l}(\tau)}{\partial \tau}\Big|_{\tau = \widetilde{\tau}_l^{(j)}}$, $f_{\widetilde \tau_l}''(\widetilde{\tau_l}^{(j)})= \frac{\partial^2 f_{\widetilde \tau_l}(\tau)}{\partial \tau^2}\Big|_{\tau = \widetilde \tau_l^{(j)}}$, $f_{\widetilde \omega_l}'(\widetilde{\omega_l}^{(j)})= \frac{\partial f_{\widetilde \omega_l}(\omega)}{\partial \omega}\Big|_{\omega = \widetilde \omega^{(j)}_l},$ and $f_{\widetilde \omega_l}''({\widetilde \omega_l}^{(j)})=  \frac{\partial^2 f_{\widetilde \omega_l}( \omega)}{\partial \omega^2}\Big|_{\omega = \widetilde \omega_l^{(j)}}$; $f_{\widetilde \omega_l}(\omega)$ is defined similar to $f_{\widetilde \tau_l}(\tau)$. We provide explicit expressions for the above updates in Sec.~\ref{sec:soa_method}.
	
Regardless of the update methods, the off-grid delay and log-scale components are bounded between $[-\frac{r_{\tau}}{2},\frac{r_{\tau}}{2}]$ and $[-\frac{r_{\omega}}{2},\frac{r_{\omega}}{2}]$, respectively. In the FOA method, the updates to the $\hat P$ grid point values are performed using \eqref{eq:tau_update1} and \eqref{eq:alpha_update1}, while in the SOA method, they are carried out using \eqref{eq:Newton1} and \eqref{eq:Newton2}. These refined grid point values are then used to update the $\hat P$ columns of the dictionary matrix, resulting in the new dictionary $\widetilde{\mathbf A}_{\mathrm p,\omega}^{(j+1)}=\Apomega(\widetilde{\boldsymbol \tau}^{(j+1)},\widetilde{\boldsymbol \omega}^{(j+1)})$. The remaining $(N_\tau M_\alpha -\hat P)$ columns of the dictionary matrix are not altered and we get the $(j+1)$th update of the entire dictionary matrix,  $\bar{\mathbf A}_{\mathrm p,\omega}^{(j+1)}=\Apomega(\bar{\boldsymbol \tau}^{(j+1)},\bar{\boldsymbol \omega}^{(j+1)})$.
	
	The following section explains the details of the proposed VB-based off-grid DS-spread CE algorithm.
	\section{VB-Based Off-grid DS-Spread CE} \label{sec:vb_algo}
	Recall that the vector $\bar{\mathbf h}$ in \eqref{eq:ssr_model} is sparse because the number of paths, $N_p$, is much smaller than the dimension of $\bar{\mathbf{h}}$ which equals the number of grid points, $N_{\tau} M_{\alpha}$.  A Bayesian approach to estimating $\bar{\mathbf h}$ from \eqref{eq:ssr_model} involves assigning a prior distribution to $\bar{\mathbf h}$ that promotes sparsity while simplifying the computation of the posterior. A commonly used method employs a two-stage hierarchical prior~\cite{tipping2001sparse}. In the first stage, a complex Gaussian distribution is imposed to the entries of~$\bar{\mathbf h}$:
	\begin{equation}
	p(\bar{\mathbf h}|\boldsymbol{\delta})=\prod_{l=0}^{N_{\tau}M_{\alpha}-1}p(\bar h_l | \delta_l)=\prod_{l=0}^{N_{\tau}M_{\alpha}-1}\mathcal{CN}(\bar h_l;0,\delta_l^{-1}),\label{eq:h_model}
	\end{equation}
	where $\boldsymbol{\delta}=[\delta_0,...,\delta_{N_{\tau}M_{\alpha}-1}]\in \mathbb{R}_+^{N_{\tau}M_{\alpha}}$ is the precision vector, and $\delta_l$ is the precision parameter in the distribution of $\bar h_l$. In the second stage, $\delta_l$ are assumed to follow an independent and identically distributed (i.i.d.) Gamma distribution:
	\begin{equation}
	p(\boldsymbol{\delta}) \triangleq p(\boldsymbol{\delta};\epsilon_1,\epsilon_2)=\prod_{l=0}^{N_{\tau}M_{\alpha}-1}\Gamma(\delta_l;\epsilon_1,\epsilon_2).\label{eq:delta_model}
	\end{equation}
	Also, the additive noise is Gaussian distributed, and the inverse of its variance is modeled as $\gamma\sim \Gamma(\epsilon_3,\epsilon_4)$, which eliminates the need to know the noise variance beforehand. Choosing $\epsilon_1, \ldots, \epsilon_4$ to be small numbers (of the order $10^{-6}$) renders the prior non-informative and results in sparse solutions. However, the performance is not sensitive to the precise values chosen; any small numbers will do. Let $\mathbf z \triangleq \{\bar{\mathbf h},\boldsymbol{\delta},\gamma\}$ denote the latent variables. The MMSE estimate of $\mathbf z$ is the mean of the conditional posterior distribution $p( \mathbf z | \mathbf{y}_{\mathrm p} )$. Since the exact posterior is hard to compute, we approximate it using a factorized family of distributions $q(\mathbf z)$ i.e., $q(\mathbf z)=q(\bar{\mathbf h})q(\boldsymbol{\delta})q(\gamma)$.\footnote{Note that, for brevity, we use the same notation $q(\cdot)$ for all factors and distinguish them using the argument itself.} The optimal marginal distribution for the $k$th element ($k\in \{1,2,3\}$) of $\mathbf z$ is obtained by minimizing the Kullback-Leibler divergence, $\text{KL}( p( \mathbf z | \mathbf y_{\mathrm p} ) || q( \mathbf z ) )$, and is given by~\cite{bishop2006pattern}:
	\begin{equation}
	\ln q^*(z_k) \propto \mathbb{E}_{u\ne k} [\ln p(\mathbf z,\mathbf y_{\mathrm p})],\label{eq:vbi_principle}
	\end{equation}
	where $\mathbb{E}_{u\ne k} [\cdot]$ denotes the expectation with respect to all factorized distributions of $\mathbf z$ except $q(\mathbf z_k)$, and $\propto$ denotes equality up to an additive normalization constant. Using Bayes' theorem, the joint distribution $p(\mathbf z, \mathbf y_{\mathrm p})$ can be expressed as 
	\begin{align}
	p(\mathbf z, \mathbf y _{\mathrm p}) = p(\bar{\mathbf h},\boldsymbol{\delta},\gamma, \mathbf y_{\mathrm p})=p(\mathbf y_{\mathrm p}| \mathbf h ; \gamma)p(\mathbf h| \boldsymbol \delta ) p(\boldsymbol \delta) p(\gamma),\label{eq:bayes_rule}
	\end{align}
	The approximate posterior distribution is found through an iterative update of the marginals as follows:
	\begin{eqnarray}
	\ln q^{(j+1)}(\bar{\mathbf h}) &\propto& \!\langle \ln p(\mathbf z, \mathbf y_{\mathrm p}) \rangle_{q^{(j)}(\boldsymbol{\delta}),q^{(j)}(\gamma)},\label{eq:h_vbi}\\
	\ln q^{(j+1)}(\boldsymbol{\delta}) &\propto& \!\langle \ln p(\mathbf z, \mathbf y_{\mathrm p}) \rangle_{q^{(j+1)}(\bar{\mathbf h}),q^{(j)}(\gamma)},\label{eq:delta_vbi}\\
	\ln q^{(j+1)}(\gamma) &\propto& \!\langle \ln p(\mathbf z, \mathbf y_{\mathrm p}) \rangle_{q^{(j+1)}(\bar{\mathbf h}),q^{(j+1)}(\boldsymbol{\delta})},\label{eq:gamma_vbi}
	\end{eqnarray}
	where $q^{(j)}(\cdot)$ denotes the marginals in the $j$th iteration.
	Detailed derivations for the marginal updates are presented below.	\begin{enumerate}
		\item \textbf{Update of $q(\bar{\mathbf h})$}: From~\eqref{eq:h_model} and \eqref{eq:bayes_rule}, \eqref{eq:h_vbi} is derived as
		\begin{align*}
		\ln q^{(j+1 )}(\bar{\mathbf h})
		\propto & \langle \ln{p(\mathbf y_{\mathrm p}| \bar{\mathbf h}; \gamma)} \rangle_{q^{(j)}(\gamma)}+ \langle \ln{ p(\bar{\mathbf h}| \boldsymbol \delta)}\rangle_{q^{(j)}(\boldsymbol{\delta})}\\
		\propto & -\widehat {\gamma}^{(j)} ||\mathbf y_{\mathrm p}-\bar {\mathbf A}_{\mathrm p,\omega}^{(j)}\bar{\mathbf h} ||^2 -\bar{\mathbf h} ^H \widehat{ \boldsymbol \Delta}^{(j)} \bar{\mathbf h},
		\end{align*}
		where $\widehat{\gamma}^{(j)}=\langle \gamma \rangle _{q^{(j)}(\gamma)}$ and $\widehat{\boldsymbol{\Delta}}^{(j)} = \text{diag}[\widehat {\boldsymbol{\delta}}^{(j)}]$ with $\widehat {\boldsymbol{\delta}}^{(j)}=\langle \boldsymbol \delta \rangle_{q^{(j)}(\boldsymbol \delta)}$.  Hence, in the $(j+1)$th iteration, $\bar {\mathbf h}$ follows a complex Gaussian distribution with mean and covariance  given by
		\begin{align}
		&\boldsymbol{\mu}^{(j+1)}_{\bar{\mathbf h}}  =  \widehat{\gamma}^{(j)} \mathbf{\Sigma}_{\bar{\mathbf h}}^{(j+1)} \bar {\mathbf A}_{\mathrm p,\omega}^{(j)^H}\mathbf y_{\mathrm p},\text{ and} \label{eq:mu_update} \\
		&\mathbf{\Sigma}_{\bar{\mathbf h}}^{(j+1)} \!=\! ( \widehat{\gamma}^{(j)} \bar {\mathbf A}_{\mathrm p,\omega}^{(j)^H} \bar {\mathbf A}_{\mathrm p,\omega}^{(j)} +\widehat{\boldsymbol{\Delta}}^{(j)} )^{-1},\label{eq:sigma_update}
		\end{align}
		respectively.
		
		\item \textbf{Update of $q(\boldsymbol{\delta})$}: From \eqref{eq:h_model}, \eqref{eq:delta_model} and \eqref{eq:bayes_rule}, \eqref{eq:delta_vbi} becomes
		\begin{align*}
		\ln q^{(j+1)}(\boldsymbol{\delta})
		\propto & \langle  \ln{ p(\mathbf h| \boldsymbol \delta)} \rangle _{q^{(j+1)}(\bar{\mathbf h})} +\ln{p(\boldsymbol \delta)} \\
		\propto &  \sum_{l=0}^{N_{\tau}M_{\alpha}-1}\epsilon_1\ln \delta_l - \delta_l(\epsilon_2+\langle |\bar h_l |^2\rangle_{q^{(j+1)}(\bar{\mathbf h})})
		\end{align*}
		Hence, in the $(j+1)$th iteration, the factor $q(\delta_l)$, the marginal distribution of the $l$th entry of $\boldsymbol \delta$, is a Gamma distribution with mean given by
		\begin{equation}
		\widehat{\delta}^{(j+1)}_l = \langle\delta_l\rangle_{q^{(j+1)}(\delta_l)} = \frac{\epsilon_1+1}{(\epsilon_2+\langle |\bar h_l |^2\rangle_{q^{(j+1)}(\bar{\mathbf h})})}, \label{eq:delta_update}
		\end{equation}
		where $\langle |\bar h_l |^2\rangle_{q^{(j+1)}(\bar{\mathbf h})} = |\boldsymbol {\mu}^{(j+1)}_{\bar{\mathbf h}_{l}}|^2+\boldsymbol{\Sigma}^{(j+1)}_{\bar{\mathbf h}_{(l,l)}}$ for $\boldsymbol {\mu}^{(j+1)}_{\bar{\mathbf h}_{l}}$ and $\boldsymbol{\Sigma}^{(j+1)}_{\bar{\mathbf h}_{(l,l)}}$ being the $l$th entry of $\boldsymbol{\mu}^{(j+1)}_{\bar{\mathbf h}}$ and $(l,l)$th entry of $\mathbf{\Sigma}_{\bar{\mathbf h}}^{(j+1)}$, respectively.
		\item \textbf{Update of $q(\gamma)$}: Similarly, \eqref{eq:gamma_vbi} can be written as
		\begin{multline*}
		\ln q^{(j+1)}(\gamma)
		\propto  \langle  \ln{ p(\mathbf y_{\mathrm p}| \bar{\mathbf h}; \gamma)}\rangle_{q^{(j+1)}(\bar{\mathbf h})} +\ln{p(\gamma)} \\
		\propto  (\epsilon_3+M_{\mathrm{p}}-1) \ln \gamma - \gamma (\epsilon_4 + \langle ||\mathbf y_{\mathrm p}-\bar {\mathbf A}_{\mathrm p,\omega}^{(j)} \bar{\mathbf h}||^2\rangle_{q^{(j+1)}(\bar{\mathbf h})})
		\end{multline*}
		Hence, in the $(j+1)$th iteration,  $\gamma$ follows a Gamma distribution with mean $\widehat{\gamma}^{(j+1)}=\langle\gamma\rangle_{q^{(j+1)}(\gamma)} $, where
		\begin{align}
		\langle\gamma\rangle_{q^{(j+1)}(\gamma)} = \frac{M_{\mathrm{p}}+\epsilon_3}{(\epsilon_4 + \langle ||\mathbf y_{\mathrm p}-\bar {\mathbf A}_{\mathrm p,\omega}^{(j)} \bar{\mathbf h}||^2\rangle_{q^{(j+1)}(\bar{\mathbf h})})}, \label{eq:gamma_update}
		\end{align}
		with $\langle ||\mathbf y_{\mathrm p}-\bar {\mathbf A}_{\mathrm p,\omega}^{(j)} \bar{\mathbf h}||^2\rangle_{q^{(j+1)}(\bar{\mathbf h})} =\|\mathbf y_{\mathrm p}-\bar {\mathbf A}_{\mathrm p,\omega}^{(j)} \boldsymbol{\mu}^{(j+1)}_{\bar{\mathbf h}}\|^2\\
		+\mathrm{Tr}(\bar {\mathbf A}_{\mathrm p,\omega}^{(j)}\boldsymbol{\Sigma}^{(j+1)}_{\bar{\mathbf h}}\bar {\mathbf A}_{\mathrm p,\omega}^{(j)^H})$.
	\end{enumerate}
	Note that the updates to $q(\boldsymbol{\delta})$ and $q(\gamma)$ are independent of each other, so these steps can be executed in parallel. 
	
	To summarize, by iterating \eqref{eq:h_vbi}, \eqref{eq:delta_vbi} and \eqref{eq:gamma_vbi}, we obtain the posterior distribution of $\bar{\mathbf{h}}$ upon convergence. The mean of the posterior will have $\hat{P}$ nonzero entries and identifies the candidate grid locations near which the actual channel DS values lie. The next step is to refine the DS estimates by updating the dictionary, as detailed in the following subsection. 
	
	\subsection{Dictionary Matrix Update}
	To update the dictionary matrix, we focus on the $\hat P$ paths identified by the VB-based CE algorithm. This involves the $\hat{P}$ columns of the dictionary matrix, ${\widetilde {\mathbf A}}_{\mathrm p,\omega} \in \mathbb{C}^{M_{\mathrm p} \times \hat P}$, the corresponding $\hat P$ entries in the sparse channel vector $\widetilde{\mathbf h} \in \mathbb{C}^{\hat P}$, its mean $\boldsymbol \mu_{\widetilde h}\in \mathbb{C}^{\hat P}$, covariance matrix  $\boldsymbol \Sigma_{\widetilde {\mathbf h}} \in \mathbb{C}^{\hat P\times \hat P}$, and the $\hat P$ entries in the precision vector, delay vector, and log-scale vector, $\widetilde{\boldsymbol \delta}\in\mathbb{R}_+^{\hat{P}}$, $\widetilde {\boldsymbol \tau}\in\mathbb{R}_+^{\hat{P}}$ and $\widetilde {\boldsymbol \omega}\in\mathbb{R}^{\hat{P}}$, respectively. Refinement of the grid-points that constitute the dictionary matrix is carried out using two approaches: the FOA and SOA methods, described in Sec.\ref{sec:FOA} and Sec.\ref{sec:soa_approximation_obj}, respectively. We now derive the corresponding update rules as follows:
	\subsubsection{FOA Method} \label{sec:foa_method}
	We consider a uniform prior on the off-grid delay and log-scale variables, as follows
	\begin{equation}
	\boldsymbol{\beta}_{\widetilde \tau} \sim \mathcal{U}([-\frac{r_{\tau}}{2},\frac{r_{\tau}}{2}]^{\hat P}),{\text{  and }} \label{eq:beta_tau_model} 
	\boldsymbol{\beta}_{\widetilde \omega} \sim \mathcal{U}([-\frac{r_{\omega}}{2},\frac{r_{\omega}}{2}]^{\hat P}).
	\end{equation}
	We consider $\boldsymbol{\beta}_{\widetilde \tau}$ and $\boldsymbol{\beta}_{\widetilde \omega}$ to be latent variables and concatenate them with $\widetilde{\mathbf z}=\{\widetilde{\mathbf h}, \widetilde{\boldsymbol \delta}, \gamma\}$ to obtain the new set of latent variables $\mathbf z_{\text{new}}=\{\widetilde{\mathbf z},\boldsymbol{\beta}_{\widetilde \tau},\boldsymbol{\beta}_{\widetilde \omega}\}$. Assuming a factorized distribution $q(\mathbf z_{\text{new}})=q(\widetilde{\mathbf z})q(\boldsymbol{\beta}_{\widetilde \tau})q(\boldsymbol{\beta}_{\widetilde \omega})$, the update of the marginals $q(\boldsymbol{\beta}_{\widetilde \tau})$ and $q(\boldsymbol{\beta}_{\widetilde \omega})$ can be found from
	\begin{align}
	\ln q^{(j+1)}(\boldsymbol{\beta}_{\widetilde \tau}) &\propto \langle \ln p(\mathbf z_{\text{new}}, \mathbf y_{\mathrm p}) \rangle_{q^{(j+1)}(\widetilde{\mathbf z}),q^{(j)}(\boldsymbol{\beta}_{\widetilde \omega})},\text{ and}\label{eq:beta_tau_vbi}\\
	\ln q^{(j+1)}(\boldsymbol{\beta}_{\widetilde \omega}) &\propto \langle \ln p(\mathbf z_{\text{new}}, \mathbf y_{\mathrm p}) \rangle_{q^{(j+1)}(\widetilde{\mathbf z}),q^{(j+1)}(\boldsymbol{\beta}_{\widetilde\tau})},\label{eq:beta_alpha_vbi}
	\end{align}
	respectively. From \eqref{eq:bayes_rule} and \eqref{eq:beta_tau_model}, we can rewrite \eqref{eq:beta_tau_vbi} as
	\begin{align}
	\!\!\!\!\ln q^{(j+1)}(\boldsymbol{\beta}_{\widetilde\tau})
	&\propto  \langle  \ln{ p(\mathbf y_{\mathrm p}| \widetilde{\mathbf z};\boldsymbol{\beta}_{\widetilde \tau}, \boldsymbol{\beta}_{\widetilde\omega})}\rangle_{q^{(j+1)}(\widetilde{\mathbf z}),q^{(j)}(\boldsymbol{\beta}_{\widetilde\omega})} \notag \\ &\hspace{1 cm}+\ln{p(\boldsymbol{\beta}_{\widetilde\tau})}\notag \\
	&\propto -\widehat {\gamma}^{(j+1)} \langle ||\mathbf y_{\mathrm p}-\widetilde {\mathbf A}_{\mathrm p,\omega}^{(j)} \widetilde{\mathbf h} ||^2 \rangle_{q^{(j+1)}(\widetilde{\mathbf h})}\label{eq:step1}\\
	&\propto -\widehat {\gamma}^{(j+1)} ( \boldsymbol{\beta}^T_{\widetilde \tau} \mathbf P^{(j+1)}_{\widetilde \tau} \boldsymbol{\beta}_{\widetilde \tau} - 2 \mathbf v^{(j+1)\, T}_{\widetilde \tau} \boldsymbol{\beta}_{\widetilde \tau} ),\label{eq:step2}
	\end{align}
	where $\mathbf P^{(j+1)}_{\widetilde\tau}$ is a positive semi-definite matrix given by 
		\begin{equation}
			\mathbf P^{(j+1)}_{\widetilde\tau} = \Re\{(\mathbf B_{\mathrm p}^H \mathbf B_{\mathrm p})^{\ast} \odot ( \boldsymbol{\mu}^{(j+1)}_{\widetilde{\mathbf h}} \boldsymbol{\mu}^{(j+1)^H}_{\widetilde{\mathbf h}}+\boldsymbol{\Sigma}^{(j+1)}_{\widetilde{\mathbf h}}) \}, \label{eq:P_tau_eq}
		\end{equation}
		 and the vector $\mathbf v^{(j+1)}_{\widetilde\tau}$ is given by~\eqref{eq:v_tau_eq}, at the top of the page, and other notation (such as $\mathbf{B}_{\mathrm{p}}$ and $\mathbf{C}_{\mathrm{p}}$) are defined after \eqref{eq:dic_update1}.
		\begin{figure*}[t]
			\vspace{-0.4 cm}
		\begin{equation}
			\mathbf v^{(j+1)}_{\widetilde\tau}=\Re \Bigl\{ \text{diag} [\boldsymbol{\mu}^{(j+1)^{\ast}}_{\widetilde{\mathbf h}}] \mathbf B_{\mathrm p}^H\Bigl( \mathbf y_{\mathrm p}- ( \widetilde{\mathbf A}_{\mathrm p,\omega}^{(0)}+ \mathbf C_{\mathrm p} \text{diag} [{\boldsymbol{\beta}}^{(j)}_{\widetilde\omega}]) \boldsymbol{\mu}^{(j+1)}_{\widetilde{\mathbf h}} \Bigr)-\text{diag} \Bigl[\mathbf B_{\mathrm p}^H ( \widetilde{\mathbf A}_{\mathrm p,\omega}^{(0)}+\mathbf C_{\mathrm p}\text{diag}[{\boldsymbol{\beta}}^{(j)}_{\widetilde\omega}])\boldsymbol{\Sigma}^{(j+1)}_{\widetilde{\mathbf h}} \Bigr]\Bigr\}.\label{eq:v_tau_eq}
		\end{equation}
		\hrulefill
		\vspace{-0.5 cm}
	\end{figure*}
	Note that, we obtain \eqref{eq:step2} from \eqref{eq:step1} by substituting the expression for $\widetilde {\mathbf A}_{\mathrm p,\omega}^{(j)}$ from \eqref{eq:dic_update1} and performing further simplifications following \cite{DDSLi,DDgridEvolution}. Hence, if $\mathbf P^{(j+1)}_{\widetilde \tau}$ is invertible, $\boldsymbol{\beta}_{\widetilde\tau}$ follows a Gaussian distribution with mean
	\begin{equation}
	{\boldsymbol \beta}^{(j+1)}_{\widetilde \tau}={\mathbf P}^{(j+1)^{-1}}_{\widetilde \tau}{\mathbf v}^{(j+1)}_{\widetilde \tau}.\label{eq:beta_tau_1}
	\end{equation}
	Otherwise, the $\hat{p}$th element in $\boldsymbol{\beta}^{(j+1)}_{\widetilde\tau}$ follows a Gaussian distribution with mean
	\begin{equation}
	 [{\boldsymbol \beta}^{(j+1)}_{\widetilde \tau} ]_{\hat p}=\frac{ [{\mathbf v}^{(j+1)}_{\widetilde\tau} ]_{\hat p}-[ [ \mathbf P^{(j+1)}_{\widetilde \tau} ]_{\neq \hat p}]_{(\hat p, :)} [\boldsymbol \beta^{(j)}_{\widetilde\tau} ]_{\neq\hat p}}{ [ \mathbf P^{(j+1)}_{\widetilde\tau} ]_{(\hat p,\hat p)}}.\label{eq:beta_tau_2}
	\end{equation}
	where $\hat p \in \{1,\ldots,\hat P\}$, $[{\mathbf v} ]_{\hat p}$ denotes  the $\hat p$th entry of ${\mathbf v}$, and $[\mathbf{P}]_{\neq \hat p}$ is a matrix that is the same as $\mathbf{P}$, but with the $\hat p$th column discarded; $[\mathbf{P}]_{(:,\hat p)}$ and $[\mathbf{P}]_{(\hat p, :)}$ denote the $\hat p$th column and row of $\mathbf{P}$, respectively. Finally, if $ [ \mathbf P^{(j+1)}_{\widetilde\tau}  ]_{(\hat p,\hat p)}$ equals zero, we do not update the $\hat{p}$th entry of $\boldsymbol{\beta}_{\widetilde \tau}$ in the $(j+1)$th iteration.
	
	Similarly, we can show that the off-grid log-scale variable follows a Gaussian distribution with mean
	\begin{equation}
	\boldsymbol \beta^{(j+1)}_{\widetilde \omega}=\mathbf P^{(j+1)^{-1}}_{\widetilde \omega}\mathbf v^{(j+1)}_{\widetilde \omega} \label{eq:beta_alpha_1}
	\end{equation}
	if $\mathbf P^{(j+1)^{-1}}_{\widetilde \omega}$ is invertible, where $\mathbf P_{\widetilde\omega}$ is given by
	\begin{equation}
		\mathbf P^{(j+1)}_{\widetilde\omega} = \Re\{ (\mathbf C_{\mathrm p}^H \mathbf C_{\mathrm p})^{\ast} \odot ( \boldsymbol{\mu}^{(j+1)}_{\widetilde{\mathbf h}} \boldsymbol{\mu}^{(j+1)^H}_{\widetilde{\mathbf h}}+\boldsymbol{\Sigma}^{(j+1)}_{\widetilde{\mathbf h}}) \}, \label{eq:P_alpha_eq}
	\end{equation}
	and $\mathbf v_{\widetilde\omega}$ is given by \eqref{eq:v_alpha_eq} on the next page. 
	\begin{figure*}[t]
		\begin{equation}
		\mathbf v^{(j+1)}_{\widetilde\omega}=\Re \Bigl\{ \text{diag} [\boldsymbol{\mu}^{(j+1)^{\ast}}_{\widetilde{\mathbf h}}] \mathbf C_{\mathrm p}^H\Bigl( \mathbf y_{\mathrm p}- ( \widetilde{\mathbf A}_{\mathrm p,\omega}^{(0)}+ \mathbf B_{\mathrm p} \text{diag} [\boldsymbol{\beta}^{(j+1)}_{\widetilde\tau}]) \boldsymbol{\mu}^{(j+1)}_{\widetilde{\mathbf h}} \Bigr)-\text{diag} \Bigl[\mathbf C_{\mathrm p}^H ( \widetilde{\mathbf A}_{\mathrm p,\omega}^{(0)}+\mathbf B_{\mathrm p}\text{diag}[\boldsymbol{\beta}^{(j+1)}_{\widetilde \tau}])\boldsymbol{\Sigma}^{(j+1)}_{\widetilde{\mathbf h}} \Bigr]\Bigr\}.\label{eq:v_alpha_eq}
		\end{equation}
		\vspace{-0.6 cm}
	\end{figure*}
	Otherwise, $\hat{p}$th element in $\boldsymbol{\beta}^{(j+1)}_{\widetilde \omega}$ follows a Gaussian distribution with mean
	\begin{equation}
	 [{\boldsymbol{\beta}}^{(j+1)}_{\widetilde\omega} ]_{\hat p}=\frac{ [{\mathbf v}^{(j+1)}_{\widetilde \omega}]_{\hat p}- [ [ {\mathbf P}^{(j+1)}_{\widetilde\omega} ]_{\neq \hat p} ]_{(\hat p, :)} [ {\boldsymbol \beta}^{(j)}_{\widetilde\omega} ]_{\neq\hat p}}{ [ {\mathbf P}^{(j+1)}_{\widetilde\omega}  ]_{(\hat p,\hat p)}}.\label{eq:beta_alpha_2}
	\end{equation}
	
	We now provide intuitive explanations for the update rule of the off-grid delay \eqref{eq:beta_tau_1} and log-scale variables \eqref{eq:beta_alpha_1}. Substitutng the FOA expression for $\widetilde {\mathbf A}_{\mathrm p,\omega}^{(j)} $ from \eqref{eq:dic_update1} into \eqref{eq:ssr_model} yields
	\begin{equation*}
		\mathbf y_{\mathrm p}=\left(\widetilde{\mathbf A}_{\mathrm p,\omega}^{(0)}+\mathbf B_{\mathrm p}\text{ diag}[\boldsymbol{\beta_{\widetilde \tau}}]+\mathbf C_{\mathrm p}\text{ diag}[\boldsymbol{\beta_{\widetilde\omega}}^{(j)}]\right)\widetilde{\mathbf h}+\mathbf w_{\mathrm p},
	\end{equation*}
	which can be rewritten as
	\begin{equation}
		\mathbf y_{\mathrm p}-\widetilde{\mathbf A}_{\mathrm p,\omega}^{(0)}\widetilde{\mathbf h}-\mathbf C_{\mathrm p}\text{ diag}[\boldsymbol{\beta_{\widetilde\omega}}]\widetilde{\mathbf h}=\mathbf B_{\mathrm p}\text{ diag}[\widetilde{\mathbf h}]\boldsymbol{\beta_{\widetilde \tau}}+\mathbf w_{\mathrm p},\label{eq:intui_1}
	\end{equation}
	where $\widetilde{\mathbf h}\sim \mathcal{CN}(\boldsymbol{\mu}_{\widetilde{\mathbf h}},\boldsymbol{\Sigma}_{\widetilde{\mathbf h}})$, and $\mathbf w_{\mathrm p}\sim \mathcal{CN}(\mathbf 0,\frac{1}{\gamma}\mathbf I_{M_{\mathrm p}})$, From \eqref{eq:intui_1},  if we were to compute the \emph{Bayes' optimal MMSE estimator} for $\boldsymbol{\beta_{\widetilde \tau}}$, we get \eqref{eq:beta_tau_1}. In other words, \eqref{eq:beta_tau_1} represents the Bayes' optimal MMSE estimator of $\boldsymbol{\beta_{\widetilde \tau}}$ given $\boldsymbol{\beta_{\widetilde\omega}}$. Similarly, we obtain \eqref{eq:beta_alpha_1} as the Bayes' optimal MMSE estimator of $\boldsymbol{\beta}_{\widetilde \omega}$ given the value of $\boldsymbol{\beta_{\widetilde \tau}}$.
	
	\subsubsection{SOA Method} \label{sec:soa_method}
	As mentioned in Sec.~\ref{sec:soa_approximation_obj}, here, we calculate element-wise updates of $\widetilde{\boldsymbol{\tau}}^{(j+1)}$ and $\widetilde{\boldsymbol{\omega}}^{(j+1)}$ using Newton's method. The $\hat{p}$th element in $\widetilde{\boldsymbol{\tau}}^{(j+1)}$ and $\widetilde{\boldsymbol{\omega}}^{(j+1)}$ can be calculated by solving the optimization problem \eqref{eq:opt1} and \eqref{eq:opt2}, respectively, on the next page.
	\begin{figure*}[t]
		\begin{align}
		&[\widetilde{\boldsymbol{\tau}}^{(j+1)}]_{\hat p} = \operatorname*{arg\,min}_{\widetilde\tau_{\hat p}} \langle ||\mathbf y_{\mathrm p}-\Apomega(\widetilde{\boldsymbol{\tau}}^{(j)}_{\neq {\hat p}},\widetilde{\boldsymbol{\omega}}^{(j)}_{\neq {\hat p}})\widetilde{\mathbf h}_{\neq {\hat p}}-{\mathbf a}_{\mathrm p,\omega}(\widetilde{\tau}_{\hat p},\widetilde{\omega}_{\hat p}^{(j)})\widetilde h_{\hat p} ||^2 \rangle_{q^{(j+1)}(\widetilde{\mathbf h})},\label{eq:opt1}
		\\
		&[\widetilde{\boldsymbol{\omega}}^{(j+1)}]_{\hat p} = \operatorname*{arg\,min}_{\widetilde\omega_{\hat p}} \langle ||\mathbf y_{\mathrm p}-\Apomega(\widetilde{\boldsymbol{\tau}}^{(j)}_{\neq {\hat p}},\widetilde{\boldsymbol{\omega}}^{(j)}_{\neq {\hat p}})\widetilde{\mathbf h}_{\neq {\hat p}}-{\mathbf a}_{\mathrm p,\omega}(\widetilde{\tau}_{\hat p}^{(j+1)},\widetilde{\omega}_{\hat p})\widetilde h_{\hat p} ||^2 \rangle_{q^{(j+1)}(\widetilde{\mathbf h})},\label{eq:opt2}
		\end{align}
		\hrulefill
		\vspace{-0.3 cm}
	\end{figure*}
	We find the expressions for element-wise update using \eqref{eq:Newton1} and \eqref{eq:Newton2} as
	\begin{align}
	&[\widetilde{\boldsymbol{\tau}}^{(j+1)}]_{\hat p} = \widetilde{\tau}_{\hat p}^{(j)}-\frac{g_{\widetilde \tau}^1(\widetilde \tau_{\hat p}^{(j)})}{g_{\widetilde \tau}^2(\widetilde \tau_{\hat p}^{(j)})}, \text{ and} \label{eq:tau_update2}\\	&[\widetilde{\boldsymbol{\omega}}^{(j+1)}]_{\hat p} = \widetilde{\omega}_{\hat p}^{(j)}-\frac{g_{\widetilde \omega}^1(\widetilde \omega_{\hat p}^{(j)})}{g_{\widetilde \omega}^2(\widetilde \omega_{\hat p}^{(j)})},\label{eq:alpha_update2}
	\end{align}
	where $g_{\widetilde \tau}^1(\widetilde \tau_{\hat p}^{(j)}),g_{\widetilde \tau}^2(\widetilde \tau_{\hat p}^{(j)}),g_{\widetilde \omega}^1(\widetilde \omega_{\hat p}^{(j)}),$  and $g_{\widetilde \omega}^2(\widetilde \omega_{\hat p}^{(j)})$ are the first and second order derivates of the objective function with respect to the parameters, and are explicitly provided in \eqref{eq:g_tau1}, \eqref{eq:g_tau2}, \eqref{eq:g_alpha1}, and \eqref{eq:g_alpha2}, respectively, on the bottom of a page.
	
	In the FOA method, the columns of the dictionary are approximated around the current grid points as a linear function of the off-grid parameters, $\boldsymbol{\beta_{\widetilde \tau}}$ and $\boldsymbol{\beta_{\widetilde \omega}}$, while ignoring higher-order effects. As a result, this method performs well only when the off-grid parameter values are small; however, if the true channel parameters (delay and scale) lie significantly off the grid, its accuracy degrades. In contrast, the SOA method uses a second-order approximation of the cost function with respect to the off-grid parameters and updates them using Newton iterations. By accounting for second-order terms, it can correct larger offsets and yield better updates.
	
	\begin{figure*}[b]
		\vspace{-0.2 cm}
		\hrulefill
		\begin{align}
		&g_{\widetilde \tau}^1(\widetilde \tau_{\hat p}^{(j)})=\Re\Bigl\{- (\mathbf y_{\mathrm p}- \widetilde{\mathbf A}_{\mathrm p,\omega}^{(j)}\boldsymbol \mu^{(j+1)}_{\widetilde {\mathbf h}} )^H [\boldsymbol \mu_{\widetilde{\mathbf h}}^{(j+1)}]_{\hat p} \frac{\partial [ \widetilde{\mathbf A}_{\mathrm p,\omega}^{(j)}]_{(:,\hat p)}}{\partial \widetilde \tau_{\hat p} } +  [\boldsymbol \Sigma _{\widetilde{\mathbf h}}^{(j+1)}\,^H  ]_{(\hat p,:)}(\widetilde{\mathbf A}_{\mathrm p,\omega}^{(j)})^H \frac{\partial [ \widetilde{\mathbf A}_{\mathrm p,\omega}^{(j)}]_{(:,\hat p)}}{\partial \widetilde \tau_{\hat p}}\Bigr\}, \label{eq:g_tau1}\\
		&g_{\widetilde \tau}^2(\widetilde \tau_{\hat p}^{(j)})= \Re\Bigl\{ \Bigl(-[\boldsymbol \mu^{(j+1)}]_{\hat p} (\mathbf y_{\mathrm p}- \widetilde{\mathbf A}_{\mathrm p,\omega}^{(j)}\boldsymbol \mu^{(j+1)}_{\widetilde{\mathbf h}} )^H +  [\boldsymbol \Sigma _{\widetilde{\mathbf h}}^{(j+1)}\,^H ]_{(\hat p,:)}(\widetilde{\mathbf A}_{\mathrm p,\omega}^{(j)})^H \Bigr) \frac{\partial^2[ \widetilde{\mathbf A}_{\mathrm p,\omega}^{(j)}]_{(:,\hat p)}}{\partial \widetilde \tau_{\hat p}^2}\Bigr\} \\
		&\hspace{1.5 cm}+	\Bigl( \Bigl|[\boldsymbol \mu_{\widetilde{\mathbf h}}^{(j+1)}]_{\hat p} \Bigr|^2+[\boldsymbol \Sigma_{\widetilde{\mathbf h}}^{(j+1)}]_{({\hat p},{\hat p})} \Bigr)\Bigl\|\frac{\partial [\widetilde{\mathbf A}_{\mathrm p,\omega}^{(j)}]_{(:,\hat p)}}{\partial \widetilde \tau_{\hat p}} \Bigr\|^2,\label{eq:g_tau2}\\
		&g_{\widetilde \omega}^1(\widetilde \omega_{\hat p}^{(j)})=\Re\Bigl\{- (\mathbf y_{\mathrm p}- \widetilde{\mathbf A}_{\mathrm p,\omega}^{(j)}\boldsymbol \mu^{(j+1)}_{\widetilde {\mathbf h}} )^H [\boldsymbol \mu_{\widetilde{\mathbf h}}^{(j+1)}]_{\hat p} \frac{\partial [ \widetilde{\mathbf A}_{\mathrm p,\omega}^{(j)}]_{(:,\hat p)}}{\partial \widetilde \omega_{\hat p} } +  [\boldsymbol \Sigma _{\widetilde{\mathbf h}}^{(j+1)}\,^H ]_{(\hat p,:)}(\widetilde{\mathbf A}_{\mathrm p,\omega}^{(j)})^H \frac{\partial [ \widetilde{\mathbf A}_{\mathrm p,\omega}^{(j)}]_{(:,\hat p)}}{\partial \widetilde \omega_{\hat p}}\Bigr\},\label{eq:g_alpha1}\\
		&g_{\widetilde \omega}^2(\widetilde \omega_{\hat p}^{(j)})= \Re\Bigl\{ \Bigl(-[\boldsymbol \mu^{(j+1)}]_{\hat p} (\mathbf y_{\mathrm p}- \widetilde{\mathbf A}_{\mathrm p,\omega}^{(j)}\boldsymbol \mu^{(j+1)}_{\widetilde{\mathbf h}} )^H + [\boldsymbol \Sigma _{\widetilde{\mathbf h}}^{(j+1)}\,^H ]_{(\hat p,:)}(\widetilde{\mathbf A}_{\mathrm p,\omega}^{(j)})^H \Bigr) \frac{\partial^2 [\widetilde{\mathbf A}_{\mathrm p,\omega}^{(j)}]_{(:,\hat p)}}{\partial \widetilde \omega_{\hat p}^2}\Bigr\}\\
		&\hspace{1.5 cm}+ \Bigl( \Bigl|[\boldsymbol \mu_{\widetilde{\mathbf h}}^{(j+1)}]_{\hat p} \Bigr|^2+[\boldsymbol \Sigma_{\widetilde{\mathbf h}}^{(j+1)}]_{({\hat p},{\hat p})} \Bigr)\Bigl\|\frac{\partial [ \widetilde{\mathbf A}_{\mathrm p,\omega}^{(j)}]_{(:,\hat p)}}{\partial \widetilde \omega_{\hat p}} \Bigr\|^2, \label{eq:g_alpha2}
		\end{align}
		\vspace{-0.9 cm}
	\end{figure*}
	\begin{algorithm}[t]
		\caption{VBI-Based Off-Grid CE}
		\begin{algorithmic}[1]
			\State \textbf{Inputs:} $\Apomega(\bar{\boldsymbol \tau}^{(0)},\bar{\boldsymbol \omega}^{(0)}),\mathbf y_{\mathrm p}, r_{\tau},r_{\omega}$, convergence threshold $\epsilon$, maximum number of iterations $J_{\max}$
			\State \textbf{Initialization:} iteration counter $j=0$,  root parameters $\epsilon_1 , \ldots , \epsilon_4 = 10^{-6}$, $\widehat{\gamma}^{(j)}=1$, $\bar{\mathbf A}^{(j)}_{\mathrm p,\omega}=\Apomega(\bar{\boldsymbol \tau}^{(0)},\bar{\boldsymbol \omega}^{(0)})$, and $\widehat {\boldsymbol \delta}^{(j)}=1./|\bar{\mathbf A}_{\mathrm p,\omega}^{(j)^H} \mathbf y_{\mathrm p}|$
			\State \textbf{Repeat}
			\State Update $\boldsymbol{\mu}_{\bar{\mathbf h}}^{(j+1)}$ and $\boldsymbol{\Sigma}_{\bar{\mathbf h}}^{(j+1)}$ using~\eqref{eq:mu_update} and \eqref{eq:sigma_update}
			\State  Update $\widehat{\boldsymbol \delta}^{(j+1)}$ and $\widehat \gamma^{(j+1)}$ using~\eqref{eq:delta_update} and~\eqref{eq:gamma_update}
			\State Update $\bar{\boldsymbol \tau}^{(j+1)}$ using \eqref{eq:tau_update1} and either \eqref{eq:beta_tau_1} (for FVB) or \eqref{eq:tau_update2} (for SVB)
			\State  Update $\bar{\boldsymbol \omega}^{(j+1)}$ using \eqref{eq:alpha_update1} and either \eqref{eq:beta_alpha_1} (for FVB) or \eqref{eq:alpha_update2} (for SVB)
			\State  Update $\bar{\mathbf A}_{\mathrm p,\omega}^{(j+1)}=\Apomega(\bar{\boldsymbol \tau}^{(j+1)},\bar{\boldsymbol \omega}^{(j+1)})$
			\State $j=j+1$
			\State \textbf{Until} $\frac {\big\|\widehat {\boldsymbol{\delta}}^{(j+1)} - \widehat {\boldsymbol{\delta}}^{(j)}\big\|_2}{\big\|\widehat {\boldsymbol{\delta}}^{(j)}\big\|_2}\le \epsilon$ or $j=J_{\max}$
			\State \textbf{Output:} $\boldsymbol{\mu}_{\bar{\mathbf h}}^{(j)}$, $\bar{\boldsymbol \tau}^{(j)}$, and $\bar{\boldsymbol \omega}^{(j)}$
		\end{algorithmic}\label{algo:vbi_offgrid}
	\end{algorithm}

	Our algorithm for VB-based off-grid DS-spread CE is summarized in Algorithm~\ref{algo:vbi_offgrid}.  We iteratively update the marginal distributions of all the latent variables to obtain the approximate posterior distribution. The mean and covariance matrix of $\bar{\mathbf h}$ are updated using \eqref{eq:mu_update} and \eqref{eq:sigma_update}. The hyper-parameters are simultaneously updated according to \eqref{eq:delta_update} and \eqref{eq:gamma_update}. Then, the grid points of the delay and log-scale parameters are updated according to \eqref{eq:tau_update1} or \eqref{eq:tau_update2} and \eqref{eq:alpha_update1} or \eqref{eq:alpha_update2}, respectively, for the FOA and SOA methods. This is followed by refining the dictionary matrix. The algorithm terminates when the normalized change in the mean of hyper-parameter $\boldsymbol \delta$ becomes smaller than a predefined convergence threshold $\epsilon$ or the iteration counter encounters the maximum number of iterations $J_{\max}$. We get the estimated sparse channel vector, delay vector, and log-scale vector as the algorithm's outputs. Finally, the estimated channel parameters are used to compute the effective channel matrix using~\eqref{eq:Ch_mat_eff}.
	
	Note that, to solve the optimization problem  \eqref{eq:opt1} and \eqref{eq:opt2}, we use a single Newton update using \eqref{eq:tau_update2} and \eqref{eq:alpha_update2}, respectively, in Algorithm \ref{algo:vbi_offgrid}. This is justified because these optimization problems are intermediate steps in an overall iterative procedure. Furthermore, since $\widetilde{\boldsymbol{\tau}}$ and $\widetilde{\boldsymbol{\omega}}$ are updated in a block coordinate descent manner, it is not necessary to run the Newton updates to convergence at each step.
	
	The VB-based iterative process follows the minorization-maximization principle of optimization, which ensures convergence to a stationary point~\cite{hunter2004tutorial}. The convergence of the first-order approximation-based  algorithm in the VB framework (i.e., FVB) is discussed in~\cite{DDSLi,DDgridEvolution} for DD-spread CE; the approach readily extends to DS-spread CE also. The convergence of the second-order approximation-based Newtonized OMP (NOMP) algorithm, as established in~\cite{nomp}, can also be extended to the VB framework (i.e., SVB) considered in this paper. Further, the per-iteration complexity of the FVB algorithm is $\mathcal{O}(M_{\mathrm p}^3+M_{\mathrm p}N_{\tau}M_{\alpha}+M_{\mathrm p}^2N_{\tau}M_{\alpha}+N_{\tau}^2M_{\alpha}^2 M_{\mathrm p}+N_{\tau}M_{\alpha}+M_{\mathrm p}\hat P^2+\hat P^3)$. The complexity of SVB is similar, except that the $\hat P^3$ term is absent due to coordinate descent updates on the individual entries of the grid points. To reduce the complexity of~\eqref{eq:sigma_update}, the matrix inversion lemma is used. Since $M_{\tau}N_{\alpha} \gg M_{\mathrm p}$ and $M_{\tau}N_{\alpha} \gg \hat P$, the overall complexity of both FVB and SVB simplifies to $\mathcal{O}(N_{\tau}^2M_{\alpha}^2 M_{\mathrm p})$.
	
	\section{Performance Benchmark}\label{sec:CRLB}
	We present a theoretical baseline to assess the performance of the proposed DS-spread channel estimators for on-grid channel parameters. To that end, we derive a CRLB for the estimation of $\mathbf H$ in \eqref{eq:Ch_mat_eff}. The bound is computed by first finding the Bayesian information matrix (BIM)~\cite{CRLB1,CRLB2} corresponding to $\bar{\mathbf h}$ in \eqref{eq:ssr_model}. Since the entries of the effective channel matrix  $\mathbf H \in \mathbb{C}^{M_{\mathrm{d}} \times M_{\mathrm{d}}}$ are a function of $\bar{\mathbf h}$ (see \eqref{eq:Ch_mat_eff}), we subsequently use compound function theory~\cite{kay1993fundamentals} to find the required  CRLB. Let $\widehat{\mathbf H}$ denote the estimated effective DS-spread channel matrix. The mean square error (MSE) matrix, ${\mathbf E}_{\text{MSE}}$, is defined as
	\begin{equation*}
		{\mathbf E}_{\mathrm{MSE}}=\mathbb E [(\text{vec}(\mathbf H)-\text{vec}(\widehat{\mathbf H}))(\text{vec}(\mathbf H)-\text{vec}(\widehat{\mathbf H}))^H  ].
	\end{equation*}
In order to derive the result, we impose a prior $\bar{\mathbf h} \sim \mathcal{CN}(0,\mathbf P_{\bar{\mathbf h}}^{-1})$ on $\bar{\mathbf h}$ in \eqref{eq:ssr_model}, where $\mathbf P_{\bar{\mathbf h}}\in \mathbb{R}_+^{N_\tau M_\alpha\times N_\tau M_\alpha}$ is a diagonal deterministic precision matrix containing the hyperparameters $\boldsymbol\theta=[\theta_0,...,\theta_{N_{\tau} M_{\alpha}-1}]^T$. Hence,	$p(\bar{\mathbf h}; \mathbf P_{\bar{\mathbf h}})=\frac {|\mathbf P_{\bar{\mathbf h}} |} {\pi^N}\text{exp}(-\bar{\mathbf h}^H\mathbf P_{\bar{\mathbf h}}\bar{\mathbf h}).$
		Now, to transform \eqref{eq:ssr_model} from the complex field to the real field, we define
		$$\mathbf A_{\mathrm R}=\begin{bmatrix}
			\Re({\mathbf A}_{\mathrm p,\omega}(\bar{\boldsymbol \tau},\bar{\boldsymbol \omega})) & -\Im({\mathbf A}_{\mathrm p,\omega}(\bar{\boldsymbol \tau},\bar{\boldsymbol \omega}))\\
			\Im({\mathbf A}_{\mathrm p,\omega}(\bar{\boldsymbol \tau},\bar{\boldsymbol \omega})) & \Re({\mathbf A}_{\mathrm p,\omega}(\bar{\boldsymbol \tau},\bar{\boldsymbol \omega})) 
		\end{bmatrix} \in \mathbb{R}^{2M_{\mathrm p}\times 2N_\tau M_\alpha},$$ 
		$$\mathbf h_{\mathrm R}= \begin{bmatrix}
			\Re(\bar{\mathbf h})\\
			\Im(\bar{\mathbf h})
		\end{bmatrix}\in \mathbb R^{2N_\tau M_\alpha}, \mathbf y_{\mathrm R}=\begin{bmatrix}
			\Re(\mathbf y_{\mathrm p})\\
			\Im(\mathbf y_{\mathrm p})
		\end{bmatrix}\in \mathbb R^{2M_{\mathrm p}}, \text{ and}$$
		$\mathbf w_{\mathrm R}=\begin{bmatrix}
			\Re(\mathbf w_{\mathrm p})\\
			\Im(\mathbf w_{\mathrm p})
		\end{bmatrix}\in \mathbb R^{2M_{\mathrm p}},
		$
		so that the new system model is
		\begin{equation}
				\mathbf y_{\mathrm R}=\mathbf A_{\mathrm R}\mathbf h_{\mathrm R}+\mathbf w_{\mathrm R}, \label{eq:real_system_model}
		\end{equation}
		where $\mathbf h_{\mathrm R}\sim \mathcal{N}(0,\mathbf P_{\mathbf h_{\mathrm R}}^{-1})$ with real-valued precision matrix $\mathbf P_{\mathbf h_{\mathrm R}}=\text{diag}\Bigl[[2\boldsymbol \theta ^T, 2\boldsymbol \theta ^T]^T\Bigr] \in \mathbb{R}_+^{2N_\tau M_\alpha \times 2N_\tau M_\alpha}$ and $\mathbf w_{\mathrm R} \sim \mathcal{N}(0,\sigma_{\mathrm R}^2)$ with $\sigma_{\mathrm R}^2=\sigma_{\mathrm p}^2/2$.  We then have the following result.
	\begin{theorem}\label{theorem:BCRB}
		The CRLB on the MSE matrix ${\mathbf E}_{\mathrm{MSE}}$ is given by
		\begin{equation}
		{\mathbf E}_{\mathrm{MSE}}\ge{\mathbf U  \mathbf{\Phi}_{\bar{\mathbf h}}^{-1} \mathbf U^{H}},
		\end{equation}
		where $\mathbf \Phi_{\bar{\mathbf h}}=\frac{1}{4}([\mathbf \Phi_{\mathbf h_{\mathrm R}}]_{\Re\Re}+[\mathbf \Phi_{\mathbf h_{\mathrm R}}]_{\Im\Im})+\frac{j}{4}([\mathbf \Phi_{\mathbf h_{\mathrm R}}]_{\Re\Im}-[\mathbf \Phi_{\mathbf h_{\mathrm R}}]_{\Im\Re}) \in \mathbb{C}^{{N_\tau M_\alpha}\times {N_\tau M_\alpha}}
		$ is the complex valued BIM corresponding to $\bar{\mathbf h}$, $\mathbf \Phi_{\mathbf h_{\mathrm R}}\!=\!\frac{\mathbf A_{\mathrm R}^T\mathbf A_{\mathrm R}}{\sigma_{\mathrm R}^2}+\mathbf P_{\mathbf h_{\mathrm R}} \in \mathbb{R}^{2N_\tau M_\alpha \times 2N_\tau M_\alpha}$, the matrices $[\mathbf \Phi_{\mathbf h_{\mathrm R}}]_{\Re\Re}, [\mathbf \Phi_{\mathbf h_{\mathrm R}}]_{\Im\Im}, [\mathbf \Phi_{\mathbf h_{\mathrm R}}]_{\Re\Im}$ and $[\mathbf \Phi_{\mathbf h_{\mathrm R}}]_{\Im\Re}$ are the top-left, bottom-right, bottom-left and top-right $N_\tau M_\alpha \times N_\tau M_\alpha$-sized submatrices of $\mathbf \Phi_{\mathbf h_{\mathrm R}}$, respectively.
		Also, $\mathbf U=[\mathbf u_0,...,\mathbf u_{N_\tau M_\alpha-1}]\in\mathbb{C}^{M_{\mathrm d}^2 \times N_\tau M_\alpha}$ with $\mathbf u_l=\mathrm{vec}(\mathbf G^H \sqrt{\bar \alpha_l} \mathbf F_{M_{\mathrm d}}^H \mathbf \Gamma_{l} \mathbf F_{M_{\mathrm d}}^{\cdot \frac{1}{\bar \alpha_l}} \mathbf G )\in \mathbb C^{M_{\mathrm d}^2}$. 
	\end{theorem}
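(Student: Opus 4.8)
The plan is to obtain the bound in four steps. \emph{Step 1 (real BIM).} Since the real model~\eqref{eq:real_system_model} is linear in $\mathbf h_{\mathrm R}$ with Gaussian noise $\mathbf w_{\mathrm R}\sim\mathcal N(\mathbf 0,\sigma_{\mathrm R}^2\mathbf I)$ and Gaussian prior $\mathbf h_{\mathrm R}\sim\mathcal N(\mathbf 0,\mathbf P_{\mathbf h_{\mathrm R}}^{-1})$, the joint log-density $\ln p(\mathbf y_{\mathrm R},\mathbf h_{\mathrm R})=\ln p(\mathbf y_{\mathrm R}\mid\mathbf h_{\mathrm R})+\ln p(\mathbf h_{\mathrm R})$ is quadratic in $\mathbf h_{\mathrm R}$; its Hessian is therefore deterministic, and negating it gives the BIM $\mathbf \Phi_{\mathbf h_{\mathrm R}}=\mathbf A_{\mathrm R}^T\mathbf A_{\mathrm R}/\sigma_{\mathrm R}^2+\mathbf P_{\mathbf h_{\mathrm R}}$. \emph{Step 2 (real CRB).} The Bayesian (Van~Trees) inequality then yields $\mathbb E[(\widehat{\mathbf h}_{\mathrm R}-\mathbf h_{\mathrm R})(\widehat{\mathbf h}_{\mathrm R}-\mathbf h_{\mathrm R})^T]\succeq\mathbf \Phi_{\mathbf h_{\mathrm R}}^{-1}$ for every estimator $\widehat{\mathbf h}_{\mathrm R}$.

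\emph{Step 3 (real $\to$ complex).} I would write $\bar{\mathbf h}=\mathbf T\mathbf h_{\mathrm R}$ with $\mathbf T=[\mathbf I_{N_\tau M_\alpha},\,j\mathbf I_{N_\tau M_\alpha}]$, so the Hermitian error covariance of $\bar{\mathbf h}$ equals $\mathbf T$ times the real error covariance times $\mathbf T^H$ and is hence $\succeq\mathbf T\mathbf \Phi_{\mathbf h_{\mathrm R}}^{-1}\mathbf T^H$. Because $\mathbf A_{\mathrm R}$ has the standard augmented real form and $\mathbf P_{\mathbf h_{\mathrm R}}=\mathrm{diag}[[2\boldsymbol\theta^T,2\boldsymbol\theta^T]^T]$ is block-diagonal with equal real blocks, $\mathbf \Phi_{\mathbf h_{\mathrm R}}$ has the circularly symmetric $2\times2$ block structure with equal diagonal blocks and opposite-signed off-diagonal blocks. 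This structure is preserved under inversion (it is a real matrix representation of complex matrices), so $\mathbf \Phi_{\mathbf h_{\mathrm R}}^{-1}$ has the same form; substituting it into $\mathbf T(\cdot)\mathbf T^H$ collapses the four real submatrices into a single complex matrix, which, after identifying the blocks using the naming in the statement, equals $\mathbf \Phi_{\bar{\mathbf h}}^{-1}$ with $\mathbf \Phi_{\bar{\mathbf h}}$ exactly as written (the $\tfrac{1}{4}$ factor and the sign pattern of the imaginary part come out of the two $j$-factors in $\mathbf T$ and $\mathbf T^H$). Hence $\mathbb E[(\widehat{\bar{\mathbf h}}-\bar{\mathbf h})(\widehat{\bar{\mathbf h}}-\bar{\mathbf h})^H]\succeq\mathbf \Phi_{\bar{\mathbf h}}^{-1}$.

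\emph{Step 4 (push through $\mathbf U$), and the hard part.} For on-grid parameters, \eqref{eq:Ch_mat_eff} together with the definition of $\mathbf H^t$ following~\eqref{eq:IpOpRel}, evaluated at the grid values $(\bar\tau_l,\bar\alpha_l)$, gives $\mathrm{vec}(\mathbf H)=\sum_l\bar h_l\,\mathrm{vec}(\mathbf G^H\sqrt{\bar\alpha_l}\mathbf F_{M_{\mathrm d}}^H\mathbf \Gamma_l\mathbf F_{M_{\mathrm d}}^{\cdot\frac{1}{\bar\alpha_l}}\mathbf G)=\mathbf U\bar{\mathbf h}$, i.e., $\mathrm{vec}(\mathbf H)$ is a \emph{linear} function of $\bar{\mathbf h}$ with constant Jacobian $\mathbf U$. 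The compound-function form of the Bayesian CRB (with this constant Jacobian) then gives $\mathbf E_{\mathrm{MSE}}=\mathbf U\,\mathbb E[(\widehat{\bar{\mathbf h}}-\bar{\mathbf h})(\widehat{\bar{\mathbf h}}-\bar{\mathbf h})^H]\,\mathbf U^H\succeq\mathbf U\mathbf \Phi_{\bar{\mathbf h}}^{-1}\mathbf U^H$, which is the claim. The step I expect to be the main obstacle is Step~3: carefully checking that the circular-symmetry block structure of $\mathbf \Phi_{\mathbf h_{\mathrm R}}$ survives inversion and tracking the scalar and sign factors so that the four real submatrices reassemble into precisely the complex $\mathbf \Phi_{\bar{\mathbf h}}$ in the statement. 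A secondary caveat worth recording is that the bound is ``sparsity-aware'' only through the prior precision $\boldsymbol\theta$ encoded in $\mathbf P_{\bar{\mathbf h}}$, so it is the Bayesian CRB associated with that prior rather than a support-clairvoyant bound.
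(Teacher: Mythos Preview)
Your proof is correct and follows the same route as the paper's appendix: compute the real BIM $\mathbf \Phi_{\mathbf h_{\mathrm R}}$ from the quadratic joint log-density, pass to the complex BIM $\mathbf \Phi_{\bar{\mathbf h}}$ (the paper simply cites a chain-rule identity for this step rather than working through the block structure explicitly as you do), write $\mathrm{vec}(\mathbf H)=\mathbf U\bar{\mathbf h}$, and apply the transformation property with Jacobian $\mathbf U$. Your Step~3, which you flag as the main obstacle, is indeed routine once one verifies that both $\mathbf A_{\mathrm R}^T\mathbf A_{\mathrm R}$ and $\mathbf P_{\mathbf h_{\mathrm R}}$ have the real $2\times2$ block form representing a complex matrix, so the isomorphism carries inversion through and the $\tfrac14$ factors and sign pattern fall out exactly as stated.
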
	
	\begin{proof}
		See Appendix.
		\vspace{-0.15 cm}
	\end{proof}
	\begin{remark}
		Since $M_{\mathrm{d}}$ data symbols are transmitted, the channel $\mathbf H$ is an $M_{\mathrm{d}} \times M_{\mathrm{d}}$ matrix, i.e., it consists of a total of $M_{\mathrm{d}}^2$ parameters. Hence, the CRLB corresponding to $\mathbf H$ is an $M_{\mathrm{d}}^2 \times M_{\mathrm{d}}^2$ matrix, with its trace providing a lower bound on the MSE in the channel estimates for each channel coefficient. Also, note that this is the CRLB for all the waveforms due to the general framework under which it has been derived.
	\end{remark}
	\section{Data Detection}\label{sec:vssd}
	We use the above VB-based off-grid CE technique to estimate the DS-spread channel parameters from the preamble measurements given by \eqref{eq:ssr_model}. Then, we estimate the effective DS-spread channel matrix $\widehat{\mathbf H}$ according to \eqref{eq:Ch_mat_eff} using the estimated parameters, and use it to detect the data symbols. In this section, we develop a novel VB-based technique for estimating the soft symbols and marginals of data bits from \eqref{eq:ip_op_eff}. Since $\widehat{\mathbf H}$ is not a diagonal matrix, exact computation of the posterior distribution of the data symbols involves the computationally expensive calculation of the so-called partition function $ p(\mathbf y)=\sum_{\mathbf x\in\mathbb{Q}^{M_{\mathrm d}}} p(\mathbf y|\mathbf x) p(\mathbf x)$, which requires summing $Q^{M_{\mathrm d}}$ exponential terms. This motivates the need to develop a low-complexity algorithm for soft-symbol detection. Let the data symbols be i.i.d. with a uniform prior $\mathrm P_{x_i}(x_i)=\frac{1}{Q}$, so that their joint distribution is $\mathrm P_{\mathbf x}(\mathbf x)=\prod_{i=1}^{M_{\mathrm d}}\mathrm P_{x_i}(x_i)$. Instead of computing the exact posterior, we approximate it by minimizing the $\text{KL}(q(\mathbf x)|| p(\mathbf x|\mathbf y))$, where $q(\mathbf x)$ is an arbitrary distribution. By imposing a factorized distribution on $q$, the optimal marginal distribution for the $i$th symbol $(i=1,2,\ldots,M_{\mathrm d} )$ in $\mathbf x$ can be derived, similar to~\eqref{eq:vbi_principle}, as $
	\ln q^*(x_i) \propto \mathbb{E}_{u\ne i} [ \ln \mathrm p(\mathbf x,\mathbf y)].$	The approximate posterior distribution is obtained through an iterative update of the marginals using the minorization-maximization principle, ensuring convergence to a stationary point. Thus, the $(j+1)^{\text{th}}$ update for the marginal of $\mathbf x$ is given by 
	\begin{equation}
	\mathbf q^{(j+1)}(z) = e^{\mathbf g^{(j+1)}(z)}. /\sum_{z\in\mathbb{Q}} e^{\mathbf g^{(j+1)}(z)},\label{eq:softmax}
	\end{equation}
	where the exponential and division are  element-wise, and $
	\mathbf g^{(j+1)}(z)\triangleq -\frac{1}{\sigma_{\mathrm d}^2}(\text{diag}[\mathbf {\widehat H}^H\mathbf{ \widehat H}]|z|^2-2\Re\{[\mathbf {\widehat H}^H\mathbf y- (\mathbf {\widehat H}^H\mathbf {\widehat H}-\text{diag}[\text{diag}[\mathbf{\widehat H}^H\mathbf {\widehat H}]]) \langle \mathbf x\rangle^{(j)}]z^*\}),
	$
	with $\langle \mathbf x\rangle^{(j)}$ being the $j$th update of the mean of $\mathbf x$. The pseudocode for the proposed VSSD algorithm is presented in Algorithm~\ref{algo:vbi_vssd}.
	\begin{algorithm}
		\caption{VSSD with Estimated Channel Matrix}
		\begin{algorithmic}[1]
			\State \textbf{Inputs:} $\mathbf y,\widehat{\mathbf H},$ noise variance $\sigma_{\mathrm d}^2,$ convergence threshold $\epsilon$, maximum number of iterations $J_{\max}$
			\State \textbf{Initialization:} iteration counter $j=1$,$\langle \mathbf x\rangle ^{(j)}=\mathbf 0$
			\State Compute $ \mathbf {\widehat H}^H \mathbf {\widehat H}$ and $ \mathbf {\widehat H}^H\mathbf y$
			\State \textbf{Repeat}
			\State Compute $\mathbf q^{(k+1)}(z), \forall z\in\mathbb{Q}$ according to~\eqref{eq:softmax}.
			\State $\langle \mathbf x \rangle ^{(j+1)}=\sum_{z\in\mathbb{Q}}z\mathbf q^{(j+1)}(z)$.
			\State $j=j+1$.
			\State \textbf{Until} $\frac {\|\langle \mathbf x\rangle^{(j+1)} - \langle \mathbf x\rangle^{(j)}\|_2}{\|\langle \mathbf x\rangle^{(j)}\|_2}\le \epsilon$ or $j=J_{\max}$
			\State \textbf{Output:} $\mathbf q^{(k+1)}(z), \forall z\in\mathbb{Q}$, $\langle \mathbf x\rangle^{(j+1)}$
		\end{algorithmic}\label{algo:vbi_vssd}
	\end{algorithm}
	\begin{remark}
Algorithm~\ref{algo:vbi_vssd} provides the approximate marginals $\mathbf q(z)$ and soft symbols $\langle \mathbf x\rangle$ as outputs. In uncoded communications, the transmitted data symbols can be decoded by slicing the soft symbols. For coded communications, the LLRs of the data bits are computed from the marginals and passed as the inputs to the channel decoder \cite{saivbjoint}. Also, Algorithm~\ref{algo:vbi_vssd} avoids matrix inversions, which enhances its computational efficiency. Specifically, while the MMSE equalizer has a computational complexity of $\mathcal{O}(N^3)$, the VSSD algorithm has a per-iteration complexity of $\mathcal{O}(N^2)$.
	\end{remark}
	\section{Iterative Channel Estimation and Data Detection (ICED)}\label{sec:icedd}
	In this section, we introduce a data-aided ICED technique to improve channel estimation and data detection. Initially, the unknown data symbols are detected using the pilot-only based estimated off-grid DS-spread channel. Subsequently, these detected data symbols $\widehat {\mathbf x}$ are used as virtual pilots to formulate a new measurement model, similar to \eqref{eq:pilot_model}, as 
	\begin{equation}
	\mathbf y=\mathbf A_{\mathrm d}(\boldsymbol{\tau},\boldsymbol{\alpha}) \mathbf h +\mathbf w \label{eq:data_model},
	\end{equation}
	where the dictionary matrix $\mathbf A_{\mathrm d}(\boldsymbol{\tau},\boldsymbol{\alpha}) \in \mathbb{C}^{M_{\mathrm d}\times P}$ is calculated using the virtual pilots as $\mathbf A_{\mathrm d}(\boldsymbol{\tau},\boldsymbol{\alpha})=[\mathbf a_{\mathrm d}(\tau_1,\alpha_1),\ldots,\mathbf a_{\mathrm d}(\tau_P,\alpha_P)],$ with
$\mathbf a_{\mathrm d}(\tau_p,\alpha_p)=\sqrt{\alpha_p}\mathbf G^H\mathbf F_{M_{\mathrm d}}^H \mathbf \Gamma_p \mathbf F_{M_{\mathrm d}}^{\cdot \frac{1}{\alpha_p}}\mathbf G \widehat{\mathbf x} \in \mathbb{C}^{M_{\mathrm d}}$.
	Next, we concatenate the two measurement models, \eqref{eq:pilot_model} and \eqref{eq:data_model}, to form an extended measurement model as
	\begin{equation}
	\mathbf y_{\mathrm E}=\mathbf A_{\mathrm E} \mathbf h +\mathbf w_{\mathrm E}, \label{eq:meas_model_new}
	\end{equation}
	where $\mathbf y_{\mathrm E}=[\mathbf y_{\mathrm p}^T,\mathbf y^T]^T \in \mathbb{C}^{(M_{\mathrm p}+M_{\mathrm d}) }$, $\mathbf A_{\mathrm E}=[\mathbf A_{\mathrm p}^T,\mathbf A_{\mathrm d}^T]^T \in \mathbb{C}^{(M_{\mathrm p}+M_{\mathrm d}) \times P}$, and $\mathbf w_{\mathrm E}=[\mathbf w_{\mathrm p}^T,\mathbf w^T]^T \in \mathbb{C}^{(M_{\mathrm p}+M_{\mathrm d}) }$. Now, we use the extended data-aided measurement model \eqref{eq:meas_model_new} to reestimate the DS-spread off-grid channel parameters. We alternate between CE and data detection until convergence (i.e., the channel and data symbol estimates do not change in successive iterations) or a maximum number of iterations have elapsed. Within the ICED framework, we employ either FVB or SVB for CE and MMSE or VSSD equalizers for data detection, ensuring fair comparison across approaches.
	
	\section{Simulation Results}\label{sec:sim_res}
	In this section, we present simulation results to evaluate the performance of the proposed off-grid DS-spread CE methods.\footnote{Matlab code is available at \href{https://shorturl.at/JJvQi}{https://shorturl.at/JJvQi.}} We also empirically compare the performance of different waveforms under PCSIR and ECSIR, in the context of UWA communications. We consider a DS-spread channel with parameters $\alpha_{\mathrm{max}}=1.001$, $\tau_{\max}=32$ ms, and $P=5$ paths. These parameters are typical for UWA channels such as rivers, estuaries, and harbors \cite{Qarabaqi2009,Yang2012,Borowski2009,vanWalreeManual}.
	We simulate the UWA DS-spread channel with the $p^{\text{th}}$ scattering path parameter: $h_{p}\overset{i.i.d.}{\sim} \mathcal{CN}(0,1),$ $\tau_{p}\overset{i.i.d.}{\sim} \mathcal{U}(0,\tau_{\max})$, and $\alpha_{p}\overset{i.i.d.}{\sim} \mathcal{U}(\alpha_{\max}^{-1},\alpha_{\max})$. Further, the preamble of the transmitted frame contains $M_{\mathrm p}=32$ pilot symbols of duration $T_{\mathrm p}=3.2$ ms. A guard interval of $T_g=10$ ms is used to avoid excessive interference from the preamble to the data symbols.
	
	  The virtual grid numbers along the delay axis and the scale axis are set to $N_{\tau}=50$ and $M_{\alpha}=5$, respectively, with $r_{\tau}=0.64$ ms and $q_{\alpha}=1.0005$. The delay grid points are uniformly spaced in $[0, \tau_{\max}]$, while the scale grid points are uniformly spaced in the log domain, i.e., over the interval $[-\log \alpha_{\max}, \log \alpha_{\max}]$. This results in a preamble-based measurement matrix $\mathbf A_{\mathrm p}^{(0)}\in \mathbb{C}^{32 \times 250}$. Moreover, to examine the effect of CE on the data payload, we consider $M_{\mathrm d}=128$ data symbols arranged in the 2D domain by choosing $N=2$ and $M=64$ for all the waveforms. The data symbols are chosen from the BPSK constellation. We consider UWA communication over the frequency band spanning $f_L=10$~kHz to $f_H=20$~kHz, i.e., the signal bandwidth is $B=10$~kHz. Hence, the subcarrier width for OFDM, OTFS, and OCDM becomes $\Delta f= 156.25$~Hz resulting in $T_s=\frac{N}{\Delta f}=12.8$~ms. For ODSS, using the constant $q=1.001$, the base subcarrier width $W=146.61$~Hz satisfies $B=\sum_{m=0}^{M-1}q^mW$, and $T_s=\frac{N}{W}=13.6$~ms. Moreover, we consider a CP duration of $T_{\mathrm {CP}}= 10$ ms, hence the total data duration becomes $T_{\mathrm d}=T_s+NT_{\mathrm{CP}}=32.8$ ms for OFDM, OTFS, and OCDM, and $33.6$  ms for ODSS.

		\begin{figure*}[t]
		\centering
		\begin{subfigure}[t]{0.49\linewidth}
			\includegraphics[width=0.9\textwidth]{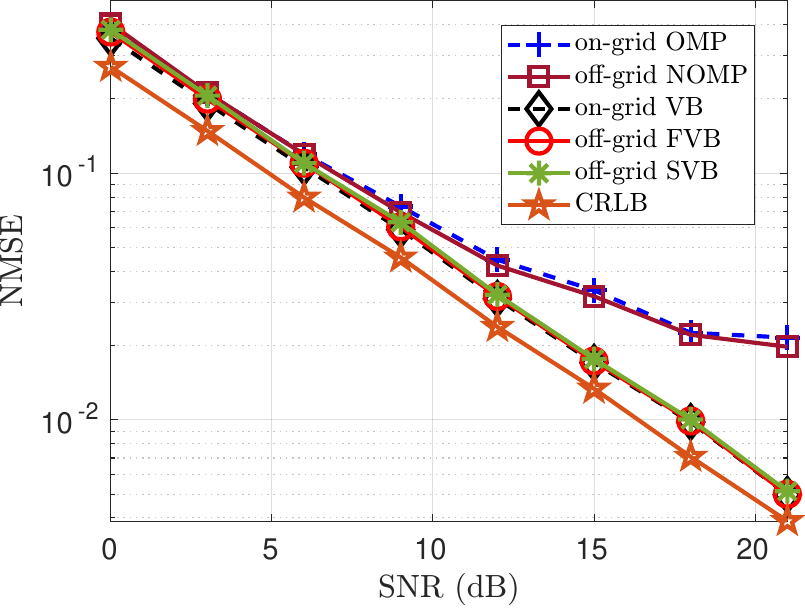}
			\caption {On-grid channel parameters.}\label{fig:nmse_ongrid}
		\end{subfigure}
		\begin{subfigure}[t]{0.49\linewidth}
			\includegraphics[width=0.9\textwidth]{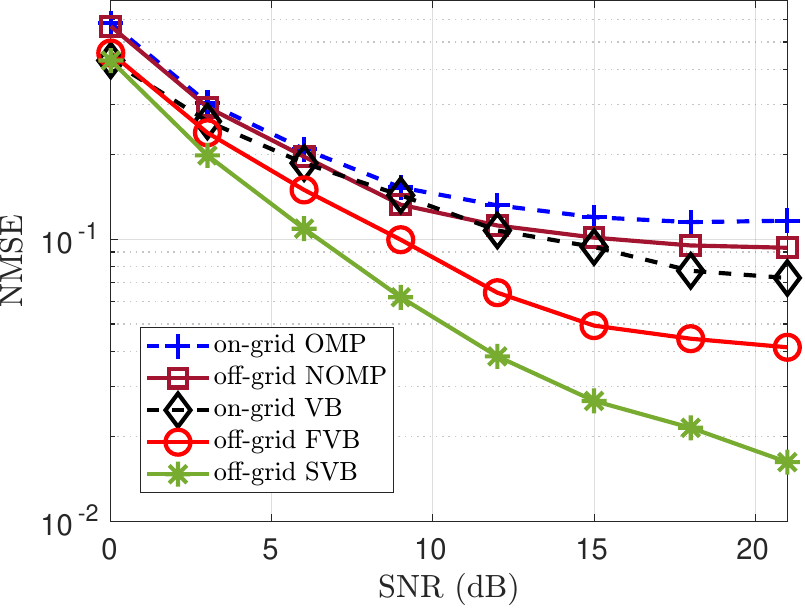}
			\caption {Off-grid channel parameters.}\label{fig:nmse_offgrid}
		\end{subfigure}
		\caption {NMSE vs. SNR for various CE algorithms using ODSS modulated preamble of $M_p=32$ symbols.}
		\vspace{-0.35cm}
	\end{figure*}
We first present the NMSE of the estimated effective DS-spread channel matrix, defined as $\text{NMSE} = \frac{\|\mathbf H - \widehat{\mathbf H}\|_{\mathrm F}^2}{\|\mathbf H\|_{\mathrm F}^2}$, using ODSS modulated preamble under on-grid and off-grid channel parameters, respectively, in Fig.~\ref{fig:nmse_ongrid} and Fig.~\ref{fig:nmse_offgrid}. The algorithms we consider are the state-of-the-art on-grid OMP and VB~\cite{icassp}, off-grid NOMP~\cite{nomp}, and the proposed off-grid FVB and SVB. The root parameters for the Algorithm~\ref{algo:vbi_offgrid} are set as $\epsilon_1=\epsilon_2=\epsilon_3=\epsilon_4=10^{-6}$, the convergence threshold $\epsilon=10^{-3}$, and the maximum number of iterations $J_{\max}=100$. To update the dictionary matrix, we threshold the sparse channel vector with $5\%$ level (i.e., we use $\widehat{P} = \lceil 0.05 N_{\tau} M_{\alpha} \rceil = 13 > P = 5$), with the thresholding being used only to identify the columns of the dictionary to be updated. In our experiments, we have seen that increasing the threshold level simply increases the time complexity without any appreciable improvement in the performance. The NMSE is averaged over $10^4$ Monte Carlo trials for each SNR value. We also include the CRLB for the estimation of the effective channel matrix under on-grid channel parameters using Theorem~\ref{theorem:BCRB}. Since the channel parameters take on-grid values, the off-grid and on-grid schemes exhibit similar performance in Fig.~\ref{fig:nmse_ongrid}. Moreover, the VB-based CE schemes demonstrate monotonically decreasing NMSE with increasing SNR and eventually approaching the CRLB. In contrast, the OMP-based CE schemes floor at NMSE $=2.1\times 10^{-2}$. This result emphasizes the advantage of VB-based estimation, as it provides a close approximation to the posterior. However, in Fig.~\ref{fig:nmse_offgrid}, where practical off-grid channel parameters are considered, the performance of on-grid CE algorithms degrade significantly, leading to a high NMSE floor. For instance, on-grid OMP floors at NMSE $=1.1\times 10^{-1}$, while on-grid VB floors at NMSE $=7.2\times 10^{-2}$. Further, off-grid NOMP also floors at NMSE $=9.3\times 10^{-2}$, indicating that on-grid VB outperforms off-grid NOMP and is a powerful approach for CE: even without considering the off-grid effects, it outperforms the OMP-based algorithms that estimate the off-grid parameter values. The proposed off-grid FVB further improves the performance, especially at high SNR and floors at NMSE $=4.1\times 10^{-2}$. Finally, the proposed off-grid SVB more effectively captures the impact of parameter perturbation in the dictionary, resulting in nearly an order of magnitude improvement in NMSE compared to the off-grid FVB. Unlike the on-grid channel parameters case, all the algorithms exhibit an error floor at high SNR; this is due to the residual error in approximating the off-grid components.
	
	\begin{figure}
		\vspace{-0.15 cm}
		\centering
		\includegraphics[width=0.45\textwidth]{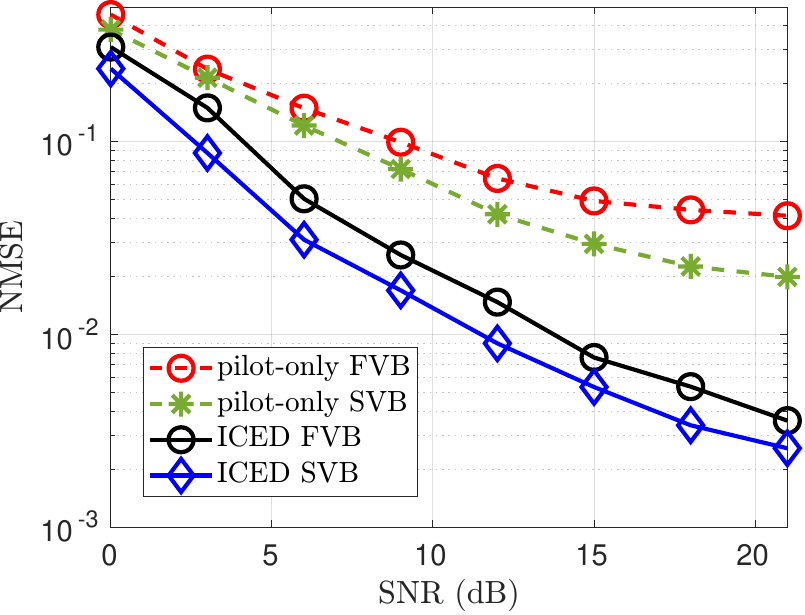}
		\caption {NMSE  vs. SNR for pilot-only and ICED techniques with ODSS waveform, where ICED utilizes $128$ detected data symbols as virtual pilots. }\label{fig:nmse_icedd}
		\vspace{-0.2cm}
	\end{figure}
	Next, we assess the NMSE of off-grid DS-spread CE for both pilot-only (using only the preamble) and data-aided ICED techniques, as shown in Fig.~\ref{fig:nmse_icedd} and Fig.~\ref{fig:nmse_all_new}. The ICED process involves estimating the off-grid DS-spread channel using the preamble of $M_{\mathrm p}=32$ symbols, detecting $M_{\mathrm d}=128$ data symbols using VSSD equalizer, and iteratively refining the channel estimates through ICED. Both FVB or SVB can be used as channel estimators for the pilot-only and ICED techniques; resulting in four combinations: pilot-only FVB, ICED FVB, pilot-only SVB, and ICED SVB. In our experiments, we find that the ICED algorithm takes only $3$ iterations to converge, making its complexity comparable to the pilot-only CE. Fig.~\ref{fig:nmse_icedd} highlights the comparison between pilot-only and ICED for the ODSS waveform. We see that both ICED FVB and ICED SVB outperform pilot-only CE significantly. For instance, ICED SVB achieves NMSE $=2\times 10^{-2}$ at SNR $=8$ dB and continues to decrease with increasing SNR, while pilot-only SVB floors, with NMSE $=2\times 10^{-2}$ at SNR $=20$ dB. In particular, the performance of the ICED FVB has dramatically improved and is comparable to ICED SVB. The enhanced performance of ICED FVB and ICED SVB is due to the effective data detection achieved by the VSSD equalizer, along with the use of the posterior channel distribution from the previous iteration as the prior for the next iteration.
		\begin{figure}[t]
		\centering
		\includegraphics[width=0.45\textwidth]{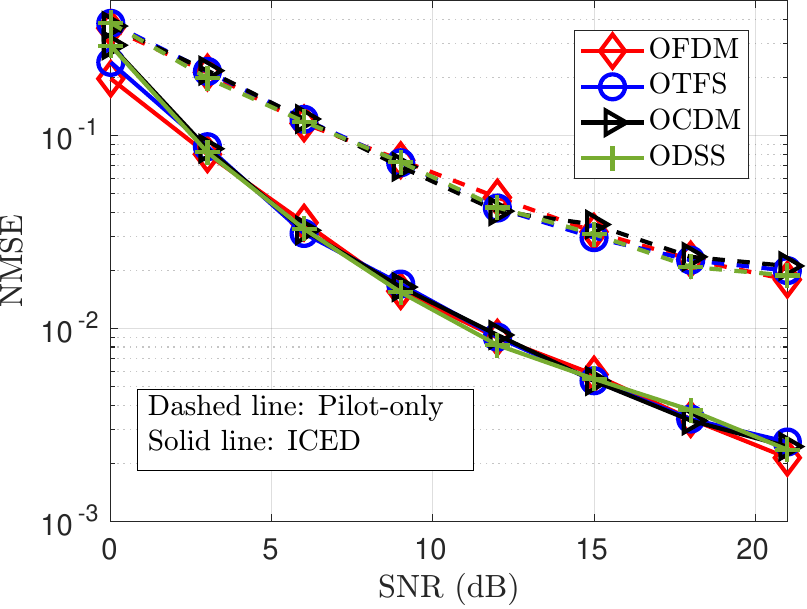}
		\caption {NMSE vs. SNR for different waveforms using SVB channel estimator.}\label{fig:nmse_all_new}
		\vspace{-0.15cm}
	\end{figure}
For a fair comparison, in Fig.~\ref{fig:nmse_all_new}, we adopt the SVB-based channel estimator for all waveforms, since it delivers better NMSE for off-grid DS-spread channels than FVB while maintaining almost similar computational complexity. 
We see that, with a robust channel estimator like SVB, all four waveforms exhibit nearly the same NMSE regardless of whether the channel is estimated using pilot-only or ICED techniques.
	
	\begin{figure}[t]
		\vspace{-0.15 cm}
		\centering
		\includegraphics[width=0.45\textwidth]{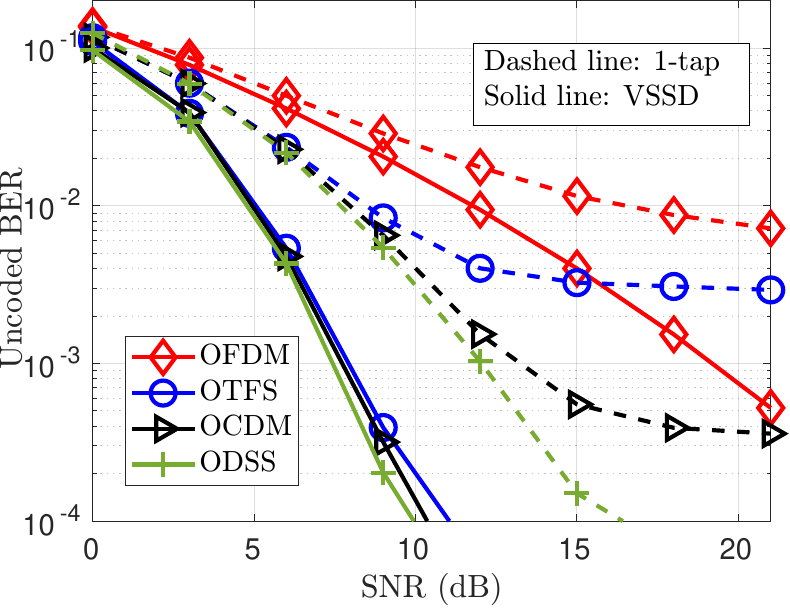}
		\caption {Uncoded BER vs. SNR for different waveforms under PCSIR.}\label{fig:ber_pcsir}
		\vspace{-0.2cm}
	\end{figure}
	
	We compare the BER of different waveforms for uncoded communications under PCSIR and ECSIR, as shown in Fig.~\ref{fig:ber_pcsir} and Fig.~\ref{fig:ber_chEst}, respectively. Fig.~\ref{fig:ber_pcsir} presents the BER with a $1$-tap equalizer and a VSSD equalizer. With the $1$-tap equalizer, the BER of OTFS floors at $3\times 10^{-3}$, outperforming  OFDM, which floors at a BER of $7\times 10^{-3}$. This is because OTFS mounts the symbols in the DD domain, and thereby gains a higher diversity. OCDM achieves a lower BER floor $(4\times 10^{-4})$, as its chirp-based subcarriers provide additional robustness. ODSS delivers the best performance with a BER floor of the order $10^{-4}$ beyond an SNR of $15$ dB, as it is designed to mitigate the time-scale effects in the DS-spread channels~\cite{odss}. The VSSD equalizer, which iteratively operates on the entire channel matrix, achieves significantly lower BER than the $1$-tap equalizer. In particular, the VSSD equalizer reduces the SNR required to obtain a BER $=10^{-2}$ by at least $4$ dB. Under the VSSD equalizer, OTFS, OCDM, and ODSS exhibit similar performance; the gap between them vanishes. Moreover, they outperform OFDM by $11.5$ dB at a BER of $10^{-3}$.
		\begin{figure}[t]
		\centering
		\includegraphics[width=0.45\textwidth]{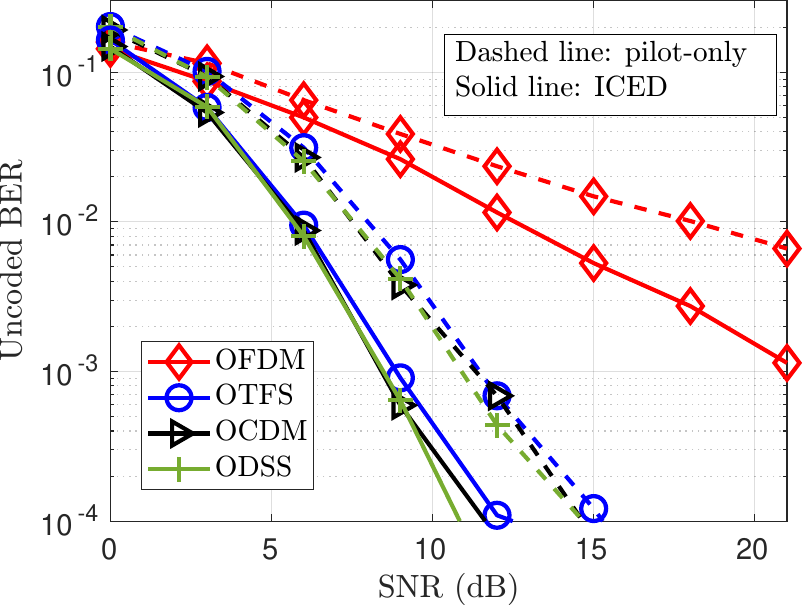}
		\caption {Uncoded BER vs. SNR for all the waveforms with SVB-VSSD ECSIR.}\label{fig:ber_chEst}
		\vspace{-0.4cm}
	\end{figure}
	Fig.~\ref{fig:ber_chEst} illustrates the impact of SVB CE with VSSD equalizer on the BER using pilot-only and ICED techniques. As observed in the NMSE results in Fig.~\ref{fig:nmse_all_new}, ICED improves the BER compared to the pilot-only technique. For instance, when using pilot-only estimated channels, an SNR of $=11.5$ dB is required to achieve a BER $=10^{-3}$ for ODSS. In contrast, ICED achieves the same BER at a lower SNR $=8.5$ dB, demonstrating the efficiency of the ICED scheme. Furthermore, comparing the results in Fig.~\ref{fig:ber_chEst} with those in Fig.~\ref{fig:ber_pcsir}, it is evident that CE only marginally degrades the performance. Specifically, at a BER $=10^{-3}$, ICED ODSS (under ECSIR) experiences a degradation of only $0.8$ dB compared to the VSSD ODSS (under PCSIR). Moreover, with the estimated channel, OTFS, OCDM, and ODSS maintain similar performance and consistently outperform OFDM.
			\begin{figure*}[t]
			\centering
			\begin{subfigure}[t]{0.49\linewidth}
		\includegraphics[width=0.9\textwidth]{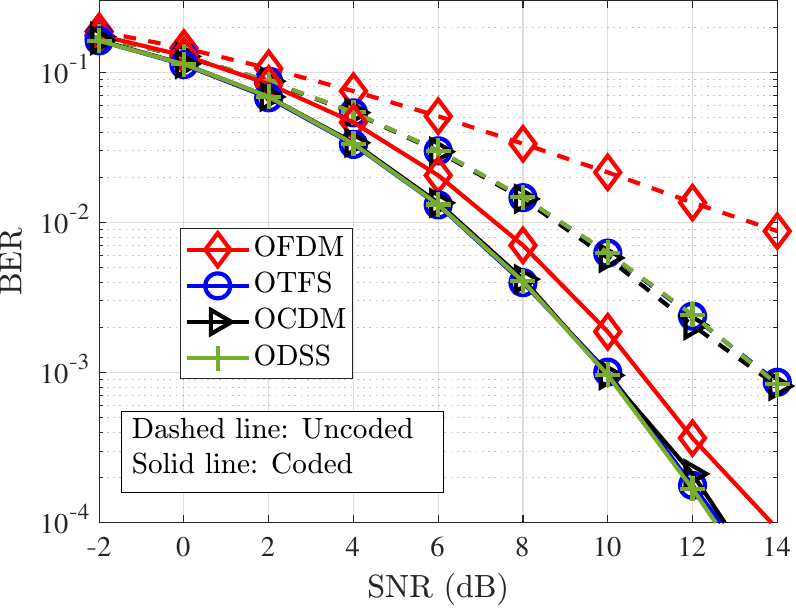}
		\caption {MMSE soft symbols.}\label{fig:ber_coded_mmse}
	 \end{subfigure}
		\begin{subfigure}[t] {0.49\linewidth}
		\includegraphics[width=0.9\textwidth]{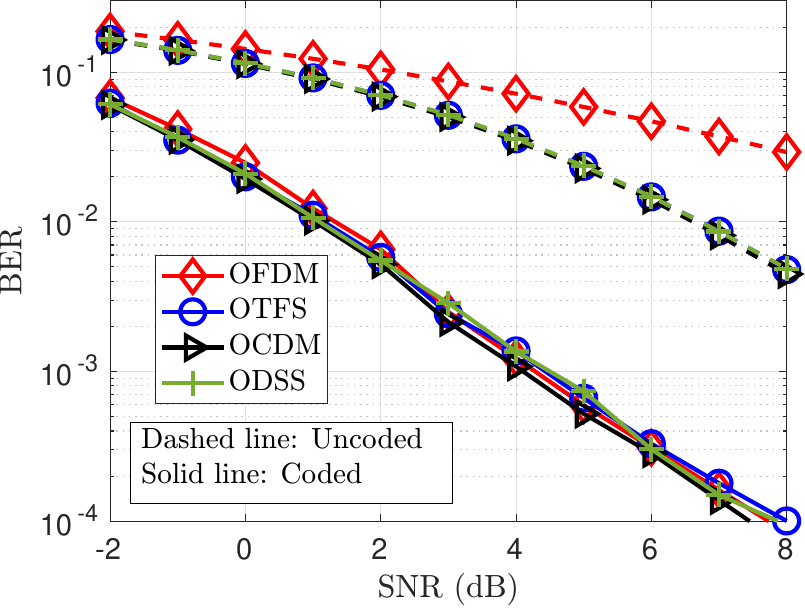}
		\caption {VSSD LLRs.}\label{fig:ber_coded_vssd}
	\end{subfigure}
	\caption {Uncoded and Coded BER vs. SNR for different waveforms under ECSIR with ICEDD SVB.}
	\vspace{-0.35cm}
	\end{figure*}

We present the BER of different waveforms for coded communications under ECSIR with ICED SVB in Fig.~\ref{fig:ber_coded_mmse} and Fig.~\ref{fig:ber_coded_vssd}.
For channel coding, we adopt an LDPC code based on the 3GPP 5G NR standard~\cite{ldpc}. Specifically, we utilize the parity-check matrix derived from LDPC base graph 1 with a lifting size of 8 and set index 0. This configuration produces code blocks containing 44 data bits and 136 encoded bits. Fig.~\ref{fig:ber_coded_mmse} illustrates the BER for both uncoded and coded communications using MMSE equalizers, with MMSE soft symbols for channel decoding. The results demonstrate that channel coding enhances performance,  providing at least $3$ dB improvement at BER $=10^{-3}$. The OTFS, OCDM, and ODSS waveforms show similar performance in both uncoded and coded cases. Although OFDM performs the worst among the waveforms in both uncoded and coded cases, the gap between OFDM and the other waveforms is considerably reduced in the coded case. For example, to achieve BER $=10^{-2}$, OTFS, OCDM, and ODSS require SNR $=8.9$ dB compared to $13.4$ dB for OFDM. In contrast, for BER $=10^{-4}$, OTFS, OCDM, and ODSS require an SNR of $12.5$~dB, which is only $1.4$ dB lower than OFDM. A similar behavior has been observed in~\cite{codedOTFS}, when comparing OFDM and OTFS in the case of DD-spread channels. In Fig.~\ref{fig:ber_coded_vssd}, we present results when the soft symbols from the VSSD algorithm (Algorithm~\ref{algo:vbi_vssd}) are used to detect the data symbols in the uncoded case, and the LLRs of the data bits, derived from the marginals of the VSSD algorithm, are passed to the channel decoder in the coded case. In the uncoded case, comparing results with Fig.~\ref{fig:ber_coded_mmse}, VSSD significantly outperforms the MMSE equalizer. For instance, in ODSS, VSSD achieves BER $=10^{-2}$ provides at $2.2$ dB lower SNR than the MMSE equalizer. Furthermore, the results confirm that channel coding significantly enhances the BER compared to the uncoded case, with VSSD LLRs offering a more pronounced improvement than MMSE soft symbols. Specifically, when using VSSD LLRs, channel coding requires $5.5$ dB lower SNR to achieve a BER of $10^{-2}$ than in the uncoded case, whereas the reduction is only $2$ dB with MMSE soft symbols. This highlights the importance of using LLRs obtained through the VSSD algorithm. Moreover, in the coded case, OFDM performs on par with the other waveforms, in sharp contrast to its relatively poor performance in the uncoded case. Also, the gap between OFDM and the other waveforms in the coded case with MMSE soft symbols completely disappears when VSSD LLRs are used. The overlapping of coded BER across all the waveforms can be attributed to the fact that the channel code effectively spreads each raw data bit in the TF domain, which gets fully exploited when VSSD LLRs are used as input to the channel decoder. Thus, when advanced channel estimation and data detection techniques such as VSSD and ICED are employed, all waveforms offer the same performance.

Finally, in Fig.~\ref{fig:ber_watermark}, we present the BER of all the waveforms under ECSIR with SVB  using a publicly available real-world measured channel, namely, the \textsc{Watermark} datasets~\cite{vanWalreeManual}. We use NOF1 channel datasets measured in a shallow stretch of Oslofjorden between a stationary source and a stationary single-hydrophone receiver. The channel parameters correspond to the NOF1 channel described in \cite{vanWalreeManual}, with a center frequency of $14$ kHz, a sounding duration of $32.9$~s, and $60$ soundings. The delay spread of the channel is $128$ ms, for which we insert a guard interval of $T_g=25$ ms between preamble and data as the power-delay profile decays by more than $20$ dB beyond $25$ ms. The data transmission schemes (modulation waveforms and data bits) and the channel coding scheme (for coded communication) are the same as those used in Fig. \ref{fig:ber_coded_vssd}. For the uncoded case, we slice the soft symbols from the VSSD decoder, while for the coded case, we use LLRs from the VSSD output as input to the channel decoder. Similar to the results in Fig.~\ref{fig:ber_pcsir} for the 1-tap equalizer, in the uncoded case, OFDM exhibits the worst performance, followed by OTFS, OCDM, and ODSS is the best.
However, in coded case, all the waveforms exhibit similar performance, owing to the spreading of data bits over time and frequency, as also observed in Fig.~\ref{fig:ber_coded_vssd}.
	\begin{figure}[t]
	\centering
	\includegraphics[width=0.45\textwidth]{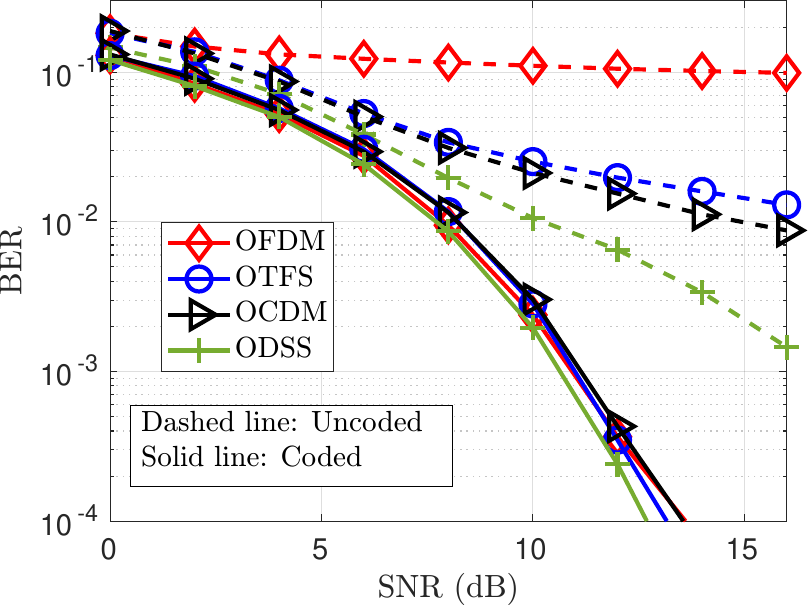}
	\caption {Uncoded and coded BER vs. SNR for different waveforms with SVB-VSSD ECSIR under NOF1-\textsc{Watermark} channel datasets.}\label{fig:ber_watermark}
	\vspace{-0.4cm}
\end{figure}

	\section{Conclusions}
	We considered data transmission over UWA channels modeled using a DS-spread representation with off-grid channel parameters. We developed a unified, waveform-agnostic receiver processing framework for CE and data detection, enabling a fair performance comparison of different waveforms. Building on this framework, we proposed two VB-based off-grid CE algorithms, FVB and SVB, which achieve the CRLB for on-grid channel parameters while outperforming existing algorithms in off-grid CE, and with SVB outperforming FVB. We also introduced a data-aided ICED algorithm that leverages the detected data symbols as virtual pilots, enhancing CE and eventually improving data detection over the pilot-only technique. In addition, we designed a low-complexity VSSD algorithm for soft symbol detection and LLR computation, which consistently outperforms the MMSE equalizer in both uncoded and coded communications.
	
	Regarding the relative behavior across different waveforms, all exhibit similar NMSE with SVB-based CE under both pilot-only and ICED techniques. In uncoded communications under PCSIR with a $1$-tap equalizer, ODSS achieves the lowest BER, followed by OCDM, OTFS, and OFDM. With the VSSD equalizer, ODSS, OCDM, and OTFS exhibit the same BER and outperform OFDM under both PCSIR and ECSIR. In coded communications, under ECSIR, the data detection (BER) performance gap between OFDM and the other waveforms reduces with MMSE soft symbols and disappears completely with VSSD LLRs. Thus, we conclude that if the receiver is complexity-constrained (or if latency constraints prevent channel coding over long time intervals), and the channels are DS-spread, waveform selection plays a critical role: ODSS offers the best performance, followed by OCDM and OTFS, with OFDM performing the worst. However, if more sophisticated receiver processing is feasible, or if the channels are benign, all waveforms perform equally well.
	
	As future work, we can reduce the complexity of the off-grid UWA CE technique by further decoupling the delay and scale parameters \cite{fw_1,fw_2}. Further, we can study the interplay between the grid size and the efficacy of the off-grid refinement, extend the results to multiple antenna systems, and theoretically analyze the performance of different waveforms in coded communications.
	
	{\appendix
		\section{Proof of the Theorem~\ref{theorem:BCRB}}
The BIM~\cite{CRLB1} of $\mathbf h_{\mathrm R}$ is calculated using \eqref{eq:real_system_model} as
\begin{equation*}
\mathbf \Phi_{\mathbf h_{\mathrm R}}\!\!=\mathbb E  \left [-\frac{\partial^2 }{\partial {\mathbf h_{\mathrm R}} \partial {\mathbf h_{\mathrm R}} ^T}\ln p(\mathbf y_{\mathrm R},\mathbf h_{\mathrm R};\mathbf P_{\mathbf h_{\mathrm R}})\right ]\!=\!\frac{\mathbf A_{\mathrm R}^T\mathbf A_{\mathrm R}}{\sigma_{\mathrm R}^2}+\mathbf P_{\mathbf h_{\mathrm R}}.
\end{equation*}

		By chain rule ~\cite{CRLB2}, the complex-valued BIM can be found as $\mathbf \Phi_{\bar{\mathbf h}}=\frac{1}{4}([\mathbf \Phi_{\mathbf h_{\mathrm R}}]_{\Re\Re}+[\mathbf \Phi_{\mathbf h_{\mathrm R}}]_{\Im\Im})+\frac{j}{4}([\mathbf \Phi_{\mathbf h_{\mathrm R}}]_{\Re\Im}-[\mathbf \Phi_{\mathbf h_{\mathrm R}}]_{\Im\Re})$.
		We can write the vectorized version of $\mathbf H$ from \eqref{eq:Ch_mat_eff} as
			$\text{vec}(\mathbf H)=\mathbf U \bar{\mathbf h},$
		where the $l$th column of $\mathbf U$ is  $\mathbf u_l=\text{vec}(\mathbf G^H \sqrt{\bar \alpha_l} \mathbf F_{M_{\mathrm d}}^H \mathbf \Gamma_{l} \mathbf F_{M_{\mathrm d}}^{\cdot \frac{1}{\bar \alpha_l}} \mathbf G )$. Finally, we use  the transformation property~\cite{kay1993fundamentals} to obtain the CRLB of $\mathbf H$ as 
		\begin{equation*}
			\text{CRLB}(\mathbf H)= {\frac{\partial\text {vec}(\mathbf H)}{\partial\bar{\mathbf h}}\mathbf \Phi_{\bar{\mathbf h}}^{-1}\left(\frac{\partial\text {vec}(\mathbf H)}{\partial \bar{\mathbf h}}\right)^H}={\mathbf U  {\mathbf \Phi_{\bar{\mathbf h}}}^{-1} \mathbf U^{H}},
		\end{equation*}
		which completes the proof.
	}
	\bibliographystyle{IEEEtran}
	\bibliography{IEEEabrv,refs}
\end{document}